\documentclass[journal,comsoc]{IEEEtran}
\usepackage{balance}
\usepackage{amssymb,amsthm,amsmath,epsfig,latexsym,graphicx,bm,bbm,nccmath}
\allowdisplaybreaks[4] % 0..4, higher = more willing to break
\usepackage{mathtools}

\usepackage{epstopdf}
\usepackage{booktabs}
\usepackage{tabularx}

\usepackage{tikz}
\usetikzlibrary{arrows.meta,positioning,fit,backgrounds,calc}

\usepackage{enumerate}
\usepackage{cite}
\usepackage{amsfonts}
\usepackage[ruled, vlined, linesnumbered, commentsnumbered, longend]{algorithm2e}
\SetKw{KwBreak}{break}

\usepackage{textcomp}
\usepackage{xcolor}
\def\BibTeX{{\rm B\kern-.05em{\sc i\kern-.025em b}\kern-.08em
		T\kern-.1667em\lower.7ex\hbox{E}\kern-.125emX}}
\usepackage[utf8x]{inputenc}
\usepackage{etoolbox}
\usepackage{multirow}
\usepackage[caption=false]{subfig}

\newtheorem{proposition}{Proposition}

\newtheorem{assumption}{Assumption}
\newtheorem{remark}{Remark}
\usepackage{easyReview} % for \comment, \alert, etc.

\makeatletter
\def\@IEEEfigurecaptionsepspace{\vskip 0pt\relax}
\def\@IEEEtablecaptionsepspace{\vskip 0pt\relax}
\makeatother

\begin{document}
	\bstctlcite{IEEEexample:BSTcontrol}
	\title{Privacy-Preserving Over-the-Air Federated Learning
		over Cell-Free Massive MIMO System with User-Centric Clustering}
	
	\author{Manobendu~Sarker and 
		Soumaya Cherkaoui,~\IEEEmembership{Senior Member,~IEEE}
		\thanks{M. Sarker is with Poly-Grames Research Center, Department of Electrical Engineering, Polytechnique Montr\'{e}al,  Canada. S. Cherkaoui is with the Department of Computer and Software Engineering, Polytechnique Montr{\'{e}}al, Canada. (Corresponding author's e-mail: manobendu.sarker@polymtl.ca)
	}}
	
	\maketitle
	
	%\IEEEpubid{0000--0000/00\$00.00~\copyright~2021 IEEE}
	
	\begin{abstract}
		This paper investigates the cross-layer design of differentially private  
		federated learning over cell-free massive MIMO networks 
		with over-the-air (OTA) aggregation. We consider a system in which 
		distributed access points (APs) collaboratively serve multiple user 
		equipments (UEs) through user-centric clustering, where local model 
		updates are clipped, perturbed with artificial Gaussian noise for privacy, 
		and transmitted simultaneously over a shared uplink for OTA averaging 
		at a central processing unit. We consider that the physical thermal noise inherent in wireless reception
		provides a natural privacy-masking resource under channel hardening,
		reducing or eliminating the need for artificial noise injection in
		the ergodic design framework. Based on this insight, we formulate a cross-layer optimization problem that jointly tunes the AP--UE association, per-UE transmit powers, and artificial noise variances to minimize the total aggregation variance subject to per-UE zero-concentrated differential privacy design constraints calibrated through an ergodic privacy accounting framework that leverages channel hardening. Analytical 
		reduction of the objective reveals a regime-dependent structure: under 
		stringent privacy requirements, the optimal design drives transmit 
		powers to their minimum and reduces AP associations to maximize the 
		thermal noise floor, effectively converting a channel impairment into 
		a free privacy resource; as privacy constraints relax, the system 
		transitions to a noise-limited regime where clustering expansion 
		minimizes interference. The resulting mixed-integer non-convex program 
		is solved via an alternating optimization framework combining 
		difference-of-convex power allocation with greedy coordinate-descent 
		clustering. Numerical evaluation demonstrates that the proposed framework 
		reduces aggregation variance by up to 89\% over conventional 
		baselines and enables federated training to achieve 97\% test 
		accuracy on MNIST while delivering consistent gains on CIFAR-10. Under 
		strict privacy budgets, all baselines remain near 
		random-chance performance, while under relaxed budgets, they 
		continue to lag significantly behind the proposed design.
	\end{abstract}
	\vspace{-6mm}
	\section{Introduction}
	Federated learning (FL) has emerged as a distributed optimization 
	framework that enables edge devices to collaboratively train machine 
	learning models without sharing their raw, private datasets 
	\cite{mcmahan2017communication}. Despite its privacy-preserving architecture, the iterative exchange of high-dimensional model updates over bandwidth-limited wireless uplinks (ULs) introduces a severe communication bottleneck. This challenge forces a fundamental tension between communication efficiency and learning fidelity. To overcome the latency and scalability issues inherent to digital orthogonal transmissions, over-the-air (OTA) computation has been widely adopted for wireless FL \cite{yang2020federated, amiri2020federated, sifaou2024ota}. By exploiting the natural superposition property of the wireless multiple-access channel, OTA-FL allows devices to transmit their analog gradients concurrently over the same time-frequency resources. The central server directly receives the aggregated model update, rendering the required bandwidth completely independent of the number of participating devices.
	
	While FL prevents the direct exposure of raw data, the transmitted local gradients remain vulnerable to inference attacks that can inadvertently reveal sensitive information. To mitigate this risk, differential privacy (DP) is commonly enforced through the clip-and-perturb Gaussian mechanism \cite{dwork2014algorithmic, abadi2016deep}, which bounds the sensitivity of local updates before injecting calibrated artificial noise. Advanced composition frameworks like zero-concentrated differential privacy (zCDP) can accurately track the cumulative privacy leakage over multiple training rounds \cite{bun2016concentrated}. However, injecting artificial noise structurally degrades gradient quality, thereby slowing convergence and lowering the final model accuracy. Existing approaches in communication systems generally treat this privacy budget as a strict, exogenous constraint \cite{liu2021privacy}. They calibrate the required artificial noise variance solely based on sensitivity and statistical privacy parameters, largely failing to couple the privacy mechanism with the physical-layer channel distortions experienced during wireless transmission.
	
	Most early OTA-FL frameworks predominantly assume isolated single-cell topologies \cite{liu2021privacy}. In dense practical deployments, however, severe inter-cell interference corrupts the analog aggregation process. While recent studies have proposed multi-cell interference management strategies \cite{wang2022interference, gesbert2010multi}, these cellular architectures fundamentally remain bottlenecked by cell-edge performance degradation and disjointed resource partitioning. To implement OTA-FL over fading wireless channels without suffering from these worst-case user equipment (UE) limitations, user-centric (UC) cell-free (CF) massive MIMO (mMIMO) provides an ideal aggregation substrate \cite{ngo2017cell, buzzi2017cell}. This architecture deploys a large number of distributed APs that cooperatively serve a set of devices through localized clusters, offering massive macro-diversity and robust channel hardening. To ensure scalability, the network operates under a distributed processing architecture where individual APs perform local spatial matched filtering \cite{bjornson2019making}. Instead of forwarding high-capacity raw baseband samples, the APs transmit only the combined scalar statistics over the fronthaul to the central processing unit. While current research optimizes power control and device clustering 
	to improve communication metrics such as the signal-to-noise  ratio (SNR), these physical-layer designs do not incorporate formal differential privacy constraints into the resource allocation formulation.
	
	Table~\ref{tab:literature_comparison} summarizes the main architectural and design features of representative prior works relative to the proposed framework. Prior privacy-aware wireless FL studies span both digital transmission and OTA aggregation. In the single-cell OTA-FL literature, existing designs optimize different subsets of the physical-layer variables, including transmit power, denoising factors, alignment coefficients, device scheduling, and aggregation rounds, under differential privacy constraints, while several explicitly exploit channel noise as a privacy resource \cite{liu2021privacy,jiang2023optimized,mao2024providing,yan2024over}. However, these frameworks still inherit the fundamental single-cell alignment bottleneck of OTA aggregation, whereby a worst-case UE can limit the global receive scaling and degrade aggregation fidelity \cite{jiang2023optimized,yan2024over}. In contrast, digital wireless-FL designs such as \cite{wei2022low} incorporate differential privacy through scheduling and channel assignment, but do not benefit from the bandwidth scalability of analog OTA aggregation. The CF OTA-FL framework in \cite{sifaou2024ota} adopts a scalable UC architecture with distributed combining, but does not incorporate DP into the resource allocation design. Multi-base-station (BS) approaches \cite{tavangaran2024privacy} and secure-aggregation-based CF learning frameworks \cite{batool2026privacy} address privacy in distributed topologies, but rely on digital transmission rather than analog OTA superposition. As shown in Table~\ref{tab:literature_comparison}, no prior work jointly optimizes UC AP--UE clustering, UL transmit power allocation, and per-UE differential privacy noise calibration within a scalable CF mMIMO OTA-FL architecture.
	
	\begin{table*}[!t]
		\centering
		\caption{Positioning of the proposed framework relative to representative prior works on private and/or OTA FL.}
		\label{tab:literature_comparison}
		\footnotesize
		\renewcommand{\arraystretch}{1.15}
		\setlength{\tabcolsep}{4.2pt}
		\begin{tabular}{@{}l c c c c c c c@{}}
			\toprule
			\textbf{Reference}
			& \textbf{Arch.}
			& \textbf{Aggr.}
			& \textbf{DP Mech.}
			& \textbf{Pwr.\ Ctrl.}
			& \textbf{UC Clust.}
			& \textbf{Therm.\ Credit}
			& \textbf{FL Valid.} \\
			\midrule
			Liu \emph{et~al.}~\cite{liu2021privacy}
			& SC & OTA & Gauss. & \checkmark & -- & \checkmark & \checkmark \\
			
			Wei \emph{et~al.}~\cite{wei2022low}
			& SC & Dig. & Gauss. & -- & -- & -- & \checkmark \\
			
			Jiang \emph{et~al.}~\cite{jiang2023optimized}
			& SC & OTA & Gauss. & \checkmark & -- & \checkmark & \checkmark \\
			
			Mao \emph{et~al.}~\cite{mao2024providing}
			& SC & OTA & Gauss.+ CJ & \checkmark & -- & \checkmark & \checkmark \\
			
			Yan \emph{et~al.}~\cite{yan2024over}
			& SC & OTA & Gauss. & -- & -- & \checkmark & \checkmark \\
			\midrule
			Sifaou \emph{et~al.}~\cite{sifaou2024ota}
			& CF & OTA & -- & -- & \checkmark$^{\star}$ & -- & \checkmark \\
			
			Tavangaran \emph{et~al.}~\cite{tavangaran2024privacy}
			& MC & Dig. & Gauss. & -- & -- & -- & \checkmark \\
			
			Batool \emph{et~al.}~\cite{batool2026privacy}
			& CF & Dig. & Gauss.+ Sec. Agg. & -- & -- & -- & --$^{\ddagger}$ \\
			\midrule
			\textbf{This work}
			& \textbf{CF} & \textbf{OTA} & \textbf{Gauss.}$^{\dagger}$
			& \checkmark & \checkmark & \checkmark & \checkmark \\
			\bottomrule
		\end{tabular}
		
		\vspace{1.2mm}
		\raggedright
		\scriptsize
		Arch.: SC\,=\,single-cell / single-BS wireless FL, MC\,=\,multi-cell / multi-BS, CF\,=\,CF mMIMO.\;
		Aggr.: OTA\,=\,OTA analog aggregation, Dig.\,=\,digital (non-OTA) transmission/aggregation.\;
		DP Mech.: Gauss.\,=\,Gaussian DP mechanism or Gaussian-noise-based DP accounting, CJ\,=\,cooperative jammer, Sec. Agg.\,=\,secure aggregation / cryptographic masking, --\,=\,not addressed.\;
		Pwr.\ Ctrl.\,=\,privacy-aware transmit-power optimization.\;
		UC Clust.\,=\,UC AP--UE clustering/association explicitly considered in the system design.\;
		Therm.\ Credit\,=\,physical channel/thermal noise explicitly credited toward the privacy budget or privacy guarantee.\;
		FL Valid.\,=\,end-to-end (E2E) federated learning task validation reported.\;
		$^{\dagger}$Ergodic zCDP design proxy calibrated under channel hardening (see Section~\ref{subsec:zcdp}).\;
		$^{\star}$UC clustering is part of the CF architecture and AP--UE association is explicitly studied, but not jointly optimized with DP.\;
		$^{\ddagger}$Federated meta-learning is validated for wireless resource allocation, but not through a standard E2E FL prediction task such as MNIST/CIFAR classification.
		\vspace{-4mm}
	\end{table*}
	\vspace{-2mm}
	\subsection{Motivation and Contributions}
	
	The distributed processing architecture inherently blinds the central server to individual UL signals, offering a significant yet overlooked physical-layer privacy advantage at the design level. Specifically, the residual thermal noise generated at the AP antennas is functionally identical to the artificial Gaussian perturbation required for DP \cite{liu2021privacy}. When devices operate under strict transmit power limits, severe channel fading traditionally forces the network to either drop vulnerable UEs or drastically shrink the global alignment coefficient, thereby impacting the received SNR. Although recent studies have attempted to resolve this issue through sequential device scheduling and optimized power control \cite{cao2022optimized, wei2022low}, they continue to treat the privacy requirement as a rigid, exogenous constraint. Existing frameworks typically maximize transmit power to suppress thermal noise for better communication reliability, and then subsequently inject artificial noise of comparable magnitude to satisfy DP metrics \cite{wei2022low, jiang2023optimized}. This approach imposes a double penalty on the network, wasting 
	transmit energy while needlessly degrading the learning process. The current literature lacks a unified framework that recognizes distributed thermal noise as a free privacy resource.
	
	To address this fundamental limitation, we propose a cross-layer, privacy-preserving OTA-FL framework tailored for a UC CF mMIMO architecture. Our approach differs from existing methodologies along the three axes highlighted in Table~\ref{tab:literature_comparison}: 
	(i) unlike single-cell DP-aware OTA-FL frameworks \cite{mao2024providing,yan2024over}, the CF topology alleviates the worst-case UE alignment bottleneck through macro-diversity; 
	(ii) unlike digital multi-BS designs \cite{tavangaran2024privacy} and secure-aggregation-based CF learning frameworks \cite{batool2026privacy}, the proposed system exploits analog OTA superposition for bandwidth-scalable aggregation; and 
	(iii) unlike the CF OTA-FL design in \cite{sifaou2024ota}, our framework couples physical-layer resource allocation with differential privacy constraints through an ergodic privacy calibration supported by channel hardening.
	
	As established by the comparative analysis in 
	Table~\ref{tab:literature_comparison}, the joint treatment of 
	UC clustering, UL power control, and differential privacy 
	noise calibration within a scalable CF mMIMO OTA-FL architecture 
	has not been addressed in the existing literature. This work 
	fills this gap. The main contributions of this paper are summarized as follows:
	\begin{enumerate}
		\item We derive a 
		convergence bound for non-convex FL over imperfect OTA channels that 
		establishes a tractable, variance-based design objective. The bound 
		maps the CF OTA distortion structure into an aggregation 
		variance surrogate that explicitly incorporates multi-UE 
		interference and channel state information (CSI) error proxies under the distributed combining 
		architecture. The novelty lies not in the descent analysis itself, 
		which follows standard smoothness arguments, but in the systematic 
		identification of the physically controllable variance components 
		that constitute the cross-layer optimization target.
		\item We develop an ergodic zCDP design-based proxy that maps system-wide thermal noise directly into the privacy budget. By utilizing Chebyshev concentration bounds, we characterize 
		the reliability with which channel hardening can control the 
		instantaneous sensitivity, thereby motivating an ergodic 
		privacy calibration and its robust counterpart.
		\item We 
		formulate a joint optimization of UC clustering, transmit powers, 
		and artificial noise variances. By analytically eliminating the 
		artificial noise through the tight privacy constraint, we unveil 
		a two-regime structural insight wherein the privacy-limited regime 
		drives transmit powers down to substitute free physical thermal 
		noise for costly artificial noise. We exploit this structure through 
		an alternating optimization framework combining a difference-of-convex (DC) power allocation solved via the convex-concave procedure (CCP) with a greedy coordinate-descent clustering update, yielding a monotonically convergent alternating optimization (AO) procedure whose terminal point is blockwise locally optimal with respect to the CCP power update and accepted greedy clustering modifications.
		\item  We validate the framework 
		across multiple datasets and privacy regimes. The results confirm 
		that the proposed cross-layer design enables a qualitative transition 
		from random-chance to convergent training under strict privacy budgets where conventional schemes fail.
	\end{enumerate}
	
	\vspace{-2mm}
	\section{System Model}
	\label{sec:system}
	
	We consider a UC CF mMIMO network supporting synchronous FL over a shared wireless UL \cite{sifaou2024ota}. The network consists of $N$ single-antenna UEs indexed by $\mathcal{N}=\{1,\dots,N\}$ and $L$ distributed APs indexed by $\mathcal{L}=\{1,\dots,L\}$, each equipped with $M$ antennas. All APs are connected to a central processing unit (CPU) through high-capacity fronthaul links. To isolate the cross-layer interactions among clustering, power control, OTA distortion, and DP, we assume full participation of all UEs in each communication round.
	\vspace{-2mm}
	\subsection{UC Clustering and Channel Model}
	\label{subsec:clustering_channel}
	
	To ensure scalability and respect fronthaul capacity limitations, each UE is served by a dynamically selected subset of APs \cite{buzzi2017cell}. We define binary association variables $a_{m,k}\in\{0,1\}$, where $a_{m,k}=1$ indicates that AP $m$ participates in processing the UL signal of UE $k$. Due to hardware constraints, we consider that each AP serves at most $U_{\max}$ UEs per round, yielding the load limit $\sum_{k=1}^{N} a_{m,k} \leq U_{\max}$ for all $m\in\mathcal{L}$. Similarly, each UE is served by at most $L_{\max}$ APs, enforced by $\sum_{m=1}^{L} a_{m,k} \leq L_{\max}$ for all $k\in\mathcal{N}$.
	
	UL channels follow a block-fading model \cite{ngo2017cell, buzzi2017cell}. The channel from UE $k$ to AP $m$ in round $t$ is modeled as $\mathbf{h}_{m,k}^{(t)} \sim \mathcal{CN}(\mathbf{0}, \mathbf{R}_{m,k})$, where the spatial correlation matrix $\mathbf{R}_{m,k}\in\mathbb{C}^{M\times M}$ captures large-scale fading, shadowing, and antenna correlation. Each AP estimates the local channels using UL pilot sequences via minimum mean-square error (MMSE) estimation. The true channel decomposes as $\mathbf{h}_{m,k}^{(t)} = \hat{\mathbf{h}}_{m,k}^{(t)} + \tilde{\mathbf{h}}_{m,k}^{(t)}$, where $\hat{\mathbf{h}}_{m,k}^{(t)}$ is the channel estimate and $\tilde{\mathbf{h}}_{m,k}^{(t)}\sim \mathcal{CN}(\mathbf{0}, \mathbf{C}_{m,k})$ is the independent estimation error. The covariance of the estimate is $\boldsymbol{\Phi}_{m,k} = \mathbf{R}_{m,k}-\mathbf{C}_{m,k}$. This structure inherently captures pilot contamination effects when UEs share non-orthogonal training sequences.
	
	To quantify link quality while rigorously accounting for channel estimation accuracy, the effective usable gain of UE $k$ under the association vector $\mathbf{a}_k=[a_{1,k},\dots,a_{L,k}]^T$ is defined as $G_k(\mathbf{a}_k) \triangleq \sum_{m=1}^{L} a_{m,k}\,\mathrm{tr}\!\left(\boldsymbol{\Phi}_{m,k}\right)$. Because clustering and power allocation target long-term performance, the physical-layer variables $a_{m,k}$ and $p_k$ are optimized based on large-scale fading statistics, where $p_k$ is the data transmit power of UE $k$. We assume the CPU computes these optimal variables and broadcasts the corresponding normalization coefficients to the APs, keeping the large-scale configuration static across the training horizon.
	\vspace{-2mm}
	\subsection{Local Computation and DP}
	\label{subsec:local_dp}
	
	At round $t$, UE $k$ executes local training to compute a model update $\mathbf{g}_k^{(t)}\in\mathbb{R}^{d}$. To support sample-level DP, the UE applies a clip-and-perturb Gaussian mechanism prior to transmission \cite{dwork2014algorithmic}. The update is clipped to an $\ell_2$-norm bound $C>0$ to limit sensitivity: $\bar{\mathbf{g}}_k^{(t)} = \mathbf{g}_k^{(t)}\cdot \min\!\left(1,\frac{C}{\|\mathbf{g}_k^{(t)}\|_2}\right)$. Subsequently, the UE injects independent Gaussian noise to form the privatized update $\tilde{\mathbf{g}}_k^{(t)} = \bar{\mathbf{g}}_k^{(t)} + \mathbf{n}_k^{\mathrm{DP}}(t)$, where $\mathbf{n}_k^{\mathrm{DP}}(t)\sim\mathcal{N}(\mathbf{0},\sigma_k^2\mathbf{I}_d)$, and $\sigma_k^2$ is the allocated artificial noise variance. We collect these into the vector 
	$\boldsymbol{\sigma}^2 = [\sigma_1^2, \ldots, \sigma_N^2]^T$.
	\vspace{-2mm}
	\subsection{Scalable OTA Aggregation}
	\label{subsec:ota}
	
	To transmit the real-valued updates over the complex baseband channel, pairs of coordinates are mapped to the in-phase and quadrature components of complex symbols. For analytical clarity, we express the transmission and variance models on a per-real-coordinate basis, dropping the coordinate index. Letting $\sigma_n^2$ denote the thermal noise variance per complex dimension, the received baseband signal at AP $m$ is the $M \times 1$ complex vector
	\begin{equation}
		\mathbf{y}_m^{(t)} = \sum_{k=1}^{N} \mathbf{h}_{m,k}^{(t)} x_k^{(t)} + \mathbf{z}_m^{(t)},
		\label{eq:rx_ap}
	\end{equation}
	where $\mathbf{z}_m^{(t)}\sim\mathcal{CN}(\mathbf{0},\sigma_n^2\mathbf{I}_M)$. UE $k$ transmits the scalar symbol $x_k^{(t)} = \sqrt{p_k}\,\tilde{g}_k^{(t)}$, subject to the power constraint $0 \le p_k \le P_{\max}$. For downstream variance modeling on a per-real-dimension basis, we define the effective thermal noise variance as $\sigma_w^2 = \sigma_n^2/2$. 
	
	This architecture aligns with \cite{bjornson2019making}, in which APs perform local combining and forward only scalar statistics to the CPU rather than raw baseband samples, thereby reducing fronthaul requirements and supporting scalable deployment. Specifically, AP $m$ constructs a local combiner $\mathbf{v}_m^{(t)} = \sum_{k=1}^{N} a_{m,k}\,c_k\,\hat{\mathbf{h}}_{m,k}^{(t)}$, where $c_k = (N\sqrt{p_k}\,G_k(\mathbf{a}_k))^{-1}$ is a deterministic spatial equalization coefficient designed to invert the expected large-scale gain. AP $m$ calculates the scalar $s_m^{(t)} = \mathbf{v}_m^{(t)H} \mathbf{y}_m^{(t)}$ and forwards only this single value to the CPU over the fronthaul.
	
	The CPU aggregates the scalars to form the final observation $\hat{g}^{(t)} = \sum_{m=1}^{L} s_m^{(t)}$. Under channel hardening, the effective aggregate decomposes into the desired average and residual physical-layer distortions \cite{sifaou2024ota}, given by
	\begin{equation}
		\hat{g}^{(t)} \approx \frac{1}{N}\sum_{k=1}^{N}\tilde{g}_k^{(t)} + e_{\mathrm{int}}^{(t)} + e_{\mathrm{csi}}^{(t)} + e_{\mathrm{th}}^{(t)},
		\label{eq:ota_decomp}
	\end{equation}
	where $e_{\mathrm{int}}^{(t)}$, $e_{\mathrm{csi}}^{(t)}$, $e_{\mathrm{th}}^{(t)}$ are the errors due to multi-UE interference, CSI, thermal noise, respectively. By evaluating the expectations of the quadratic forms induced by the sum-matched filtering, we map the post-combining distortion to the scale of a single UE. Because the wireless channel is a broadcast medium, UE $j$'s signal physically reaches AP $m$ regardless of the clustering assignment $a_{m,j}$. The multi-UE interference proxy operating at the CPU aggregation scale evaluates to
	\begin{equation}
		e_{\mathrm{int},k}(\mathbf{A},\mathbf{p}) \le \frac{1}{N^2 p_k\,G_k(\mathbf{a}_k)^2} \sum_{j\neq k} p_j \sum_{m=1}^{L} a_{m,k}\,\mathrm{tr}\!\big(\boldsymbol{\Phi}_{m,k}\mathbf{R}_{m,j}\big).
		\label{eq:e_int_proxy}
	\end{equation}
	
	The CSI error variance captures the distortion from imperfect knowledge of UE~$k$'s own channel and depends solely on the clustering decisions. Its dominant self-term evaluates to
	\begin{equation}
		e_{\mathrm{csi},k}(\mathbf{A}) \le \frac{1}{N^2\,G_k(\mathbf{a}_k)^2} \sum_{m=1}^{L} a_{m,k}\,\mathrm{tr}\!\big(\boldsymbol{\Phi}_{m,k}\mathbf{C}_{m,k}\big).
		\label{eq:e_csi_proxy}
	\end{equation}
	Cross-UE CSI leakage terms proportional to $\mathrm{tr}(\boldsymbol{\Phi}_{m,j}\mathbf{C}_{m,k})$ for $j\neq k$ are already absorbed by the multi-UE interference proxy \eqref{eq:e_int_proxy}, which uses the total channel covariance $\mathbf{R}_{m,j}=\boldsymbol{\Phi}_{m,j}+\mathbf{C}_{m,j}$; accounting for them separately would double-count their effect. A formal derivation showing the power-independence of the self-term is provided in Appendix~\ref{app:csi_derivation}.
	
	Crucially, the aggregate thermal noise affecting the CPU observation is a single, shared random variable. Since the thermal noise vectors are independent across APs, the conditional variance is strictly $\sigma_w^2 \sum_{m=1}^{L} \|\mathbf{v}_m\|^2$. Taking the expectation over the large-scale fading timescale, and neglecting positive cross-UE estimate correlations to form a tractable lower-bound proxy under pilot contamination, the expected system-wide thermal noise variance is defined as
	\begin{equation}
		\Sigma(\mathbf{A},\mathbf{p}) \triangleq \frac{\sigma_w^2}{N^2} \sum_{j=1}^N \frac{1}{p_j G_j(\mathbf{a}_j)}.
		\label{eq:e_th_proxy}
	\end{equation}
	
	The proxy in~\eqref{eq:e_th_proxy} is obtained by evaluating the 
	expected combining norm $\mathbb{E}[\|\mathbf{v}_m\|^2]$ and 
	retaining only the diagonal self-terms $a_{m,j}^2 c_j^2 
	\,\mathrm{tr}(\boldsymbol{\Phi}_{m,j})$. The omitted cross-UE 
	terms $a_{m,j}\,a_{m,j'}\,c_j\,c_{j'}\,\mathrm{tr}
	(\boldsymbol{\Phi}_{m,j}\boldsymbol{\Phi}_{m,j'})$ for $j \neq j'$ 
	are nonnegative under the positive semidefinite covariance 
	structure, so their exclusion yields a lower bound on the true 
	expected thermal noise variance. This introduces a dual-use 
	tension in the cross-layer design. As a lower bound, 
	$\Sigma(\mathbf{A},\mathbf{p})$ provides a safe underestimate 
	when used in the privacy 
	constraint~\eqref{eq:privacy_constraint_local}, ensuring that the 
	credited thermal shielding does not exceed what the channel 
	provides in expectation. However, the same bound is structurally 
	optimistic when evaluating convergence performance 
	through~\eqref{eq:convergence_bound}, since the true aggregation 
	noise may exceed the proxy.
	
	The analysis in this paper isolates the coupling between privacy 
	accounting, power control, and UC clustering under 
	several modeling assumptions. Full UE participation is assumed in 
	every communication round, and the large-scale optimization 
	variables $(\mathbf{A}, \mathbf{p}, \boldsymbol{\sigma}^2)$ remain 
	static across the training horizon. The fronthaul links between 
	APs and CPU are treated as reliable and delay-free. The 
	per-coordinate scalar transmission model in 
	Section~\ref{subsec:ota} abstracts the mapping of real-valued 
	updates to complex baseband symbols through in-phase/quadrature (I/Q) packing. While 
	the channel model in Section~\ref{subsec:clustering_channel} 
	accommodates pilot contamination through the general covariance 
	structure $\mathbf{R}_{m,k} = \boldsymbol{\Phi}_{m,k} + 
	\mathbf{C}_{m,k}$, the numerical evaluation employs orthogonal 
	pilots with $\tau_p = N$, under which the cross-UE estimate 
	correlations are suppressed and the proxy accuracy improves. Extensions 
	to partial UE participation, dynamic large-scale adaptation, 
	pilot reuse, and fronthaul capacity constraints constitute 
	directions for future work.
	
	\section{Privacy and Learning Preliminaries}
	\label{sec:prelim}
	
	\subsection{Privacy Accounting and Ergodic Design Proxy}
	\label{subsec:zcdp}
	
	We assume the CPU is honest-but-curious and observes the sequence $\{\hat{g}^{(t)}\}_{t=1}^{T}$. An explicit security advantage of the distributed scalar-forwarding architecture in  \cite{bjornson2019making} is that raw baseband samples are never forwarded over the fronthaul, physically blinding the CPU to individual UL streams.
	
	If a single data sample in UE $k$'s dataset is altered, the clipped update $\bar{\mathbf{g}}_k^{(t)}$ changes by a maximum of $C$ in $\ell_2$-norm. Under channel hardening, the instantaneous effective coefficient $\sqrt{p_k}\,\alpha_k^{\mathrm{self}} = S_k/(N G_k)$, where $S_k = \sum_{m=1}^L a_{m,k}\|\hat{\mathbf{h}}_{m,k}\|^2$, concentrates tightly around $1/N$. The resulting per-coordinate sensitivity satisfies $\Delta_{\mathrm{inst}} = C\,S_k/(N G_k)$, with $\mathbb{E}[\Delta_{\mathrm{inst}}] = C/N$. A quantitative bound on the concentration is given next.
	
	\begin{remark}
		\label{rem:sensitivity}
		Define the channel-hardening quality metric
		\begin{equation}
			\beta_k \triangleq \frac{G_k(\mathbf{a}_k)^2}{\sum_{m=1}^{L} a_{m,k}\,\mathrm{tr}\!\bigl(\boldsymbol{\Phi}_{m,k}^2\bigr)}.
			\label{eq:beta_k}
		\end{equation}
		By Chebyshev's inequality applied to $S_k = \sum_m a_{m,k}\|\hat{\mathbf{h}}_{m,k}\|^2$, the instantaneous sensitivity satisfies
		\begin{equation}
			\Pr\!\left[\Delta_{\mathrm{inst}} \ge (1+\xi)\frac{C}{N}\right] \le \frac{1}{\xi^2\,\beta_k}, \quad \forall\,\xi > 0,
			\label{eq:sensitivity_chebyshev}
		\end{equation}
		where $\xi$ is the deviation factor. Under isotropic spatial correlation $\boldsymbol{\Phi}_{m,k}=\phi_{m,k}\mathbf{I}_M$, the metric evaluates to $\beta_k = M\,(\textstyle\sum_m a_{m,k}\phi_{m,k})^2 \big/ \sum_m a_{m,k}\phi_{m,k}^2$, which satisfies $\beta_k \ge M$ with equality when $|\mathcal{L}_k|=1$. The failure probability therefore scales as $\mathcal{O}(1/M)$, confirming that massive arrays provide the concentration required for the ergodic design proxy. Over $T$ rounds, a union bound yields a cumulative failure probability of at most $T/(\xi^2\beta_k)$.
	\end{remark}
	
	To rigorously track cumulative privacy loss, we apply zCDP \cite{bun2016concentrated}. Let $\rho_k^{\text{tot}}$ denote the total zCDP budget corresponding to a target $(\varepsilon_k, \delta)$-DP guarantee, converted via $\varepsilon_k = \rho_k^{\text{tot}} + 2\sqrt{\rho_k^{\text{tot}} \ln(1/\delta)}$. Here, $\varepsilon_k$ denotes the target privacy-loss parameter for UE $k$, and $\delta$ denotes the allowable probability of privacy failure in the standard $(\varepsilon_k,\delta)$-DP sense. Assuming identical privacy allocations per round, the mechanism must satisfy a per-round budget of $\rho_k^{\text{per}} \le \rho_k^{\text{tot}}/T$.
	
	Following standard DP conventions, the privacy accounting relies exclusively on data-independent noise; thus, data-dependent interference is strictly excluded from the privacy budget. Furthermore, to ensure an autonomous privacy guarantee that does not rely on other UEs properly participating or injecting artificial noise, we conservatively account only for UE $k$'s local artificial noise and the shared physical thermal noise. 
	
	While strict realization-wise DP accounting would require a deterministic lower bound on the instantaneous thermal noise variance, dynamic adaptation of physical resources at the small-scale fading timescale is infeasible. Therefore, for long-term resource allocation, we formulate an \emph{ergodic DP design proxy}, calibrating the privacy constraint using the expected thermal noise variance $\Sigma(\mathbf{A},\mathbf{p})$. 
	
	The total expected Gaussian noise shielding the scaled update of UE $k$ in expectation is $\sigma_{\mathrm{eff},k}^2 = \frac{\sigma_k^2}{N^2} + \Sigma(\mathbf{A},\mathbf{p})$. For this combined mechanism, the per-round zCDP parameter is $\rho_k^{\text{per}} = (C/N)^2 / (2\sigma_{\mathrm{eff},k}^2)$. Enforcing $\rho_k^{\text{per}} \le \rho_k^{\text{tot}}/T$ yields
	\begin{equation}
		\frac{\sigma_k^2}{N^2} + \Sigma(\mathbf{A},\mathbf{p}) \ge \frac{T(C/N)^2}{2\rho_k^{\text{tot}}}.
		\label{eq:privacy_bound_cpu}
	\end{equation}
	Consequently, multiplying by $N^2$ aligns the constraint to the local UE scale, resulting in
	\begin{equation}
		\sigma_k^2 + N^2 \Sigma(\mathbf{A},\mathbf{p}) \ge \Theta_k \triangleq \frac{T C^2}{2\rho_k^{\text{tot}}}.
		\label{eq:privacy_constraint_local}
	\end{equation}
	
	\begin{remark}\label{rem:privacy_scope}
		The privacy constraint in~\eqref{eq:privacy_constraint_local} is 
		calibrated using the expected thermal noise variance 
		$\Sigma(\mathbf{A},\mathbf{p})$, which serves as an ergodic 
		design proxy. The resulting privacy accounting provides an 
		{ergodic zCDP design guarantee} supported by channel 
		hardening, which should not be interpreted as a strict 
		realization-wise $(\varepsilon,\delta)$-DP theorem.
		
		The distinction arises because the instantaneous thermal noise 
		at the CPU, $\sigma_w^2\sum_{m=1}^{L}\|\mathbf{v}_m\|^2$, 
		fluctuates around its expectation~$\Sigma(\mathbf{A},\mathbf{p})$. 
		By Remark~\ref{rem:sensitivity}, the channel-hardening quality 
		metric~$\beta_k$ governs this concentration: the per-round 
		Chebyshev failure probability scales as $1/(\xi^2\beta_k)$, 
		which vanishes as $\mathcal{O}(1/M)$ under isotropic 
		correlation but may remain non-negligible for small~$M$ or 
		sparse AP associations. To provide a tunable robustness 
		mechanism against such finite-$M$ fluctuations, a conservative 
		margin is introduced through
		\begin{equation}\label{eq:robust_proxy}
			\Sigma_{\mathrm{rob}}(\mathbf{A},\mathbf{p}) 
			\triangleq \kappa\,\Sigma(\mathbf{A},\mathbf{p}), 
			\qquad 0 < \kappa \leq 1.
		\end{equation}
		Replacing $\Sigma$ with $\Sigma_{\mathrm{rob}}$ in 
		\eqref{eq:privacy_constraint_local} yields the robust counterpart
		\begin{equation}\label{eq:robust_constraint}
			\sigma_k^2 + N^2\Sigma_{\mathrm{rob}}(\mathbf{A},\mathbf{p}) 
			\geq \Theta_k, \qquad \forall k \in \mathcal{N},
		\end{equation}
		which reduces the credited thermal-noise by a factor~$\kappa$, 
		requiring additional artificial noise of 
		$(1-\kappa)N^2\Sigma(\mathbf{A},\mathbf{p})$ but tightening 
		the effective privacy calibration. The parameter~$\kappa$ thus 
		controls the tradeoff between thermal-noise credit and design 
		conservatism: $\kappa=1$ represents the ergodic limit, while 
		smaller values systematically strengthen the privacy assurance 
		against instantaneous fading excursions. This substitution 
		preserves the optimization structure developed in 
		Section~\ref{sec:solution}, and Section~\ref{sec:hardening} quantifies the conservatism of this 
		concentration certificate under the simulated deployment 
		and evaluates the resulting robustification through~$\kappa$.
	\end{remark}
	
	Beyond the interpretation of the thermal-noise shielding term, the
	privacy requirement in~\eqref{eq:privacy_constraint_local} also depends explicitly on the
	clipping threshold \(C\), which introduces the following standard
	trade-off. Smaller values of \(C\) reduce the required DP noise
	\(\Theta_k \propto C^2\), but they also increase the systematic bias
	induced during learning. The treatment of $C$ as a fixed hyperparameter and its
	implications for the cross-layer objective are discussed further
	in Remark~\ref{rem:clipping_bias}. 
	\vspace{-3mm}
	\subsection{Learning Model and Utility Surrogate}
	\label{subsec:convergence}
	
	We consider the FL objective $F(\mathbf{w}) = \sum_{k=1}^{N} \pi_k F_k(\mathbf{w})$ optimized via FedAvg \cite{mcmahan2017communication}, where $\pi_k \geq 0$ denotes the aggregation weight of UE~$k$. To formulate the optimization objective, we establish standard conditions on the learning environment.
	\begin{assumption}
		\label{ass:smoothness}
		The global loss $F(\mathbf{w})$ is $L_s$-smooth and bounded below by $F(\mathbf{w}^*)$.
	\end{assumption}
	\begin{assumption}
		\label{ass:variance}
		Local stochastic gradients are unbiased with bounded variance $\sigma_g^2$, and the gradient heterogeneity across UEs is bounded by $\zeta^2$.
	\end{assumption}
	\begin{assumption}
		\label{ass:clipping}
		The clipping threshold $C$ is sufficiently large that the systematic bias introduced by gradient clipping is negligible.
	\end{assumption}
	
	Under these assumptions, we characterize the impact of the communication framework on the learning algorithm.
	
	\begin{proposition}
		\label{prop:convergence}
		Let $\eta_g \le 1/L_s$ and assume a single local stochastic gradient descent (SGD) step per round ($\tau=1$). Under Assumptions~\ref{ass:smoothness}--\ref{ass:clipping} and the additive OTA aggregation noise model in~\eqref{eq:ota_decomp}, the average squared gradient norm satisfies
		\begin{equation}
			\frac{1}{T}\sum_{t=0}^{T-1} \mathbb{E}\!\left[\left\|\nabla F(\mathbf{w}^{(t)})\right\|^2\right] \le \frac{2\Delta_0}{T\eta_g} + L_s\eta_g\!\left(\frac{\sigma_g^2}{N} + \mathrm{Var}_{\mathrm{agg}}\right),
			\label{eq:convergence_bound}
		\end{equation}
		where $\Delta_0 = F(\mathbf{w}^{(0)})-F(\mathbf{w}^*)$ is the initial optimality gap and $\mathrm{Var}_{\mathrm{agg}} = \frac{1}{N^2}\sum_{k=1}^{N}\sigma_k^2 + \Sigma(\mathbf{A},\mathbf{p}) + \sum_{k=1}^{N}\bigl(e_{\mathrm{int},k}(\mathbf{A},\mathbf{p}) + e_{\mathrm{csi},k}(\mathbf{A})\bigr)$.
	\end{proposition}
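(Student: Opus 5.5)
The plan is the standard one-step descent argument for an $L_s$-smooth objective, followed by telescoping over $T$ rounds. With $\tau=1$ and FedAvg, the global update reads $\mathbf{w}^{(t+1)} = \mathbf{w}^{(t)} - \eta_g \hat{\mathbf{g}}^{(t)}$, where $\hat{\mathbf{g}}^{(t)}$ is the CPU observation in~\eqref{eq:ota_decomp} stacked over coordinates. I will decompose
\[
\hat{\mathbf{g}}^{(t)} = \nabla F(\mathbf{w}^{(t)}) + \boldsymbol{\epsilon}_{\mathrm{sgd}}^{(t)} + \boldsymbol{\epsilon}_{\mathrm{ota}}^{(t)} .
\]
The first error term is the zero-mean stochastic-gradient error of the averaged clipped updates. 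By Assumption~\ref{ass:clipping} clipping is treated as unbiased, and the weights are taken as $\pi_k=1/N$ so that the OTA average matches $\nabla F$. The second term collects the DP noise $\frac{1}{N}\sum_k \mathbf{n}_k^{\mathrm{DP}}$ and the distortions $e_{\mathrm{int}}$, $e_{\mathrm{csi}}$, $e_{\mathrm{th}}$.

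Step one applies $L_s$-smoothness (Assumption~\ref{ass:smoothness}):
\[
F(\mathbf{w}^{(t+1)}) \le F(\mathbf{w}^{(t)}) - \eta_g\langle \nabla F(\mathbf{w}^{(t)}), \hat{\mathbf{g}}^{(t)}\rangle + \tfrac{L_s\eta_g^2}{2}\|\hat{\mathbf{g}}^{(t)}\|^2 .
\]
I then take the conditional expectation given $\mathbf{w}^{(t)}$. Unbiasedness of both error terms gives $\mathbb{E}[\hat{\mathbf{g}}^{(t)}] = \nabla F(\mathbf{w}^{(t)})$. Hence the cross term yields $-\eta_g\|\nabla F\|^2$, and
\[
\mathbb{E}\|\hat{\mathbf{g}}^{(t)}\|^2 = \|\nabla F\|^2 + \mathbb{E}\|\boldsymbol{\epsilon}_{\mathrm{sgd}}\|^2 + \mathbb{E}\|\boldsymbol{\epsilon}_{\mathrm{ota}}\|^2 .
\]
For the sgd term, independence of the local stochastic gradients across UEs and Assumption~\ref{ass:variance} give $\mathbb{E}\|\boldsymbol{\epsilon}_{\mathrm{sgd}}\|^2 \le \sigma_g^2/N$. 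Heterogeneity $\zeta^2$ does not enter with a single local step, since the average of the local full gradients equals $\nabla F$. For the OTA term, the DP noise, thermal noise, CSI-error and interference components are mutually uncorrelated: they are independent Gaussians, or they are conditionally zero mean, as with $\tilde{\mathbf{h}}$ independent of $\hat{\mathbf{h}}$. Their variances therefore add to $\frac{1}{N^2}\sum_k\sigma_k^2 + \Sigma + \sum_k(e_{\mathrm{int},k}+e_{\mathrm{csi},k})$, which is exactly $\mathrm{Var}_{\mathrm{agg}}$, using the proxies~\eqref{eq:e_int_proxy}--\eqref{eq:e_th_proxy} as per-coordinate upper bounds, with dimension normalized per coordinate.

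Step two uses $\eta_g\le 1/L_s$, so that $-\eta_g + \frac{L_s\eta_g^2}{2} \le -\frac{\eta_g}{2}$. This gives
\[
\mathbb{E}F(\mathbf{w}^{(t+1)}) \le \mathbb{E}F(\mathbf{w}^{(t)}) - \tfrac{\eta_g}{2}\mathbb{E}\|\nabla F(\mathbf{w}^{(t)})\|^2 + \tfrac{L_s\eta_g^2}{2}\Bigl(\tfrac{\sigma_g^2}{N}+\mathrm{Var}_{\mathrm{agg}}\Bigr).
\]
Summing over $t=0,\dots,T-1$, bounding $F(\mathbf{w}^{(T)})\ge F(\mathbf{w}^*)$, and multiplying by $2/(T\eta_g)$ yields~\eqref{eq:convergence_bound}.

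The main obstacle is justifying that $\boldsymbol{\epsilon}_{\mathrm{ota}}$ is zero mean and uncorrelated with $\nabla F$ and with $\boldsymbol{\epsilon}_{\mathrm{sgd}}$. The interference and CSI terms are data dependent, being multiplied by other UEs' updates. I will argue that they are products of zero-mean channel fluctuations (e.g.\ $\hat{\mathbf{h}}_{m,k}^H\mathbf{h}_{m,j}$ for $j\ne k$, and self-gain deviations from $G_k$) with the transmitted symbols. Conditioning on the updates then makes their mean vanish, and their second moments are bounded by the proxies using $\|\tilde{g}_j\|$ normalized per coordinate. I will state this as an application of the additive noise model~\eqref{eq:ota_decomp} taken as given, as the proposition itself permits.
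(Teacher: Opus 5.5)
Your proposal is correct and follows essentially the same route as the paper's proof in Appendix~\ref{app:convergence_proof}. Both use the $L_s$-smoothness descent lemma for $\mathbf{w}^{(t+1)}=\mathbf{w}^{(t)}-\eta_g\hat{\mathbf{g}}^{(t)}$, conditional unbiasedness via Assumption~\ref{ass:clipping}, a bias--variance split in which the noise sources add to give $V=\sigma_g^2/N+\mathrm{Var}_{\mathrm{agg}}$, the bound $1-L_s\eta_g/2\ge 1/2$ from $\eta_g\le 1/L_s$, and telescoping followed by division by $T\eta_g/2$. Your explicit choice $\pi_k=1/N$ and the conditioning argument for the data-dependent interference terms only make explicit what the paper asserts. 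One caution: the paper itself describes $\Sigma(\mathbf{A},\mathbf{p})$ as a lower-bound proxy that is optimistic for convergence, so that step must rest on taking the additive model~\eqref{eq:ota_decomp} with its variance proxies as given, as the paper does, rather than on \eqref{eq:e_int_proxy}--\eqref{eq:e_th_proxy} all being upper bounds.
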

	
	\begin{proof}
		See Appendix~\ref{app:convergence_proof}.
	\end{proof}
	
	\begin{remark}
		\label{rem:clipping_bias}
		Assumption~\ref{ass:clipping} treats the systematic bias introduced 
		by gradient clipping as negligible. In practice, clipping at 
		threshold $C$ may introduce an additional bias-dependent term in 
		the standard smoothness-based convergence analysis. Since $C$ is 
		treated here as a fixed learning hyperparameter, this bias is 
		independent of the physical-layer variables 
		$(\mathbf{A}, \mathbf{p}, \boldsymbol{\sigma}^2)$. Consequently, 
		minimizing $\mathrm{Var}_{\mathrm{agg}}$ remains the correct 
		cross-layer design objective regardless of the clipping regime.
		The present formulation isolates the communication-privacy coupling 
		by fixing $C$, consistent with standard practice in differentially 
		private FL.
	\end{remark}
	
	The noise floor in~\eqref{eq:convergence_bound} is proportional to $L_s\eta_g\,\mathrm{Var}_{\mathrm{agg}}$. Since $L_s$, $\eta_g$, $N$, and $\sigma_g^2$ are predetermined, minimizing the gradient-norm floor reduces to minimizing $\mathrm{Var}_{\mathrm{agg}}$. Scaling by $N^2$ for unit consistency with the locally injected noise yields the cross-layer surrogate, defined as
	\begin{align}
		\mathcal{U}(\mathbf{A},\mathbf{p},\boldsymbol{\sigma}^2) = & \sum_{k=1}^{N}\sigma_k^2 + N^2\Sigma(\mathbf{A},\mathbf{p}) \nonumber\\ 
		& + N^2\sum_{k=1}^{N}\bigl(e_{\mathrm{int},k}(\mathbf{A},\mathbf{p}) + e_{\mathrm{csi},k}(\mathbf{A})\bigr).
		\label{eq:utility_surrogate}
	\end{align}
	Because the thermal noise $\Sigma$ is a single shared physical variable, it contributes exactly once to the scaled surrogate.
	\vspace{-2mm}
	\section{Problem Formulation}
	\label{sec:problem}
	
	Our objective is to minimize the cross-layer surrogate 
	$\mathcal{U}(\mathbf{A}, \mathbf{p}, \boldsymbol{\sigma}^2)$ by jointly 
	optimizing the UC clustering matrix $\mathbf{A}$, the transmit 
	power allocation $\mathbf{p}$, and the artificial noise variances 
	$\boldsymbol{\sigma}^2$. This objective follows directly from 
	Proposition~\ref{prop:convergence}: since the steady-state gradient-norm 
	floor is governed by the additive term $L_s\eta_g\,\mathrm{Var}_{\mathrm{agg}}$, minimizing 
	$\mathcal{U}$ maximizes gradient recovery fidelity at the CPU and thereby 
	accelerates FL convergence. The joint optimization problem subject to 
	physical-layer and ergodic privacy design constraints is formulated as follows:
	\begin{subequations}
		\begin{align}
			\min_{\mathbf{A},\mathbf{p},\boldsymbol{\sigma}^2} \quad 
			& \mathcal{U}(\mathbf{A},\mathbf{p},\boldsymbol{\sigma}^2)
			\label{eq:prob_obj}\\
			\text{s.t.} \quad 
			& \sigma_k^2 + N^2 \Sigma(\mathbf{A},\mathbf{p}) \ge \Theta_k, \quad \forall k\in\mathcal{N}, \label{eq:dp_constraint}\\
			& \sigma_k^2 \ge 0, \quad \forall k\in\mathcal{N}, \label{eq:noise_pos_constraint}\\
			& 0\le p_k \le P_{\max}, \quad \forall k\in\mathcal{N}, \label{eq:power_constraint}\\
			& \sum_{m=1}^{L} a_{m,k} \le L_{\max}, \quad \forall k\in\mathcal{N}, \label{eq:cluster_constraint}\\
			& \sum_{k=1}^{N} a_{m,k} \le U_{\max}, \quad \forall m\in\mathcal{L}, \label{eq:load_constraint}\\
			& a_{m,k}\in\{0,1\}, \quad \forall m\in\mathcal{L},\ \forall k\in\mathcal{N}. \label{eq:binary_constraint}
		\end{align}
	\end{subequations}
	
	In problem \eqref{eq:prob_obj}, constraint \eqref{eq:dp_constraint} stems from \eqref{eq:robust_constraint} in Remark \ref{rem:privacy_scope}, enforcing the privacy requirement for each UE, while constraint \eqref{eq:noise_pos_constraint} imposes the nonnegativity of the artificial noise variance. Constraint \eqref{eq:power_constraint} bounds the transmit power of each UE by \(P_{\max}\). Finally, as described in Section \ref{subsec:clustering_channel}, constraints \eqref{eq:cluster_constraint} and \eqref{eq:load_constraint} limit the number of APs associated with each UE to \(L_{\max}\) and the number of UEs served by each AP to \(U_{\max}\), respectively, and constraint \eqref{eq:binary_constraint} specifies the binary nature of the AP--UE association decision variables.
	
	Since the objective is strictly increasing in $\sigma_k^2$, the optimal noise variance at the minimum resides precisely at the boundary defined by \eqref{eq:dp_constraint} and \eqref{eq:noise_pos_constraint}. The artificial noise is analytically eliminated via the relation $\sigma_k^{2\star} = \max\!\left(0, \Theta_k - N^2 \Sigma(\mathbf{A},\mathbf{p})\right)$, yielding the reduced problem operating exclusively on the physical variables $(\mathbf{A}, \mathbf{p})$:
	\begin{subequations}
		\begin{align}
			\min_{\mathbf{A},\mathbf{p}} \quad & \sum_{k=1}^{N} \max\!\left(0, \Theta_k - N^2 \Sigma(\mathbf{A},\mathbf{p})\right) + N^2 \Sigma(\mathbf{A},\mathbf{p}) \nonumber \\
			& + N^2\sum_{k=1}^{N} \big(e_{\mathrm{int},k}(\mathbf{A},\mathbf{p}) + e_{\mathrm{csi},k}(\mathbf{A})\big)
			\label{eq:reduced_problem}\\
			\text{s.t.} \quad & \quad \eqref{eq:power_constraint}\text{--}\eqref{eq:binary_constraint}.
		\end{align}
	\end{subequations}
	
	The reduced objective in \eqref{eq:reduced_problem} exposes a key structural insight regarding network operation. In a \emph{fully privacy-limited regime} where the thermal noise floor is lower than the strictest privacy requirement ($N^2 \Sigma < \min_k \Theta_k$), the max operator resolves to $\Theta_k - N^2 \Sigma$ for all UEs. The DP and thermal noise components of the objective evaluate to $\sum_k \Theta_k - (N-1)N^2\Sigma$. Because the coefficient of the thermal noise is strictly negative, the optimizer is actively driven to {increase} the physical thermal noise. Since scaling all powers down by a uniform factor $\alpha < 1$ increases $\Sigma$ proportionately as $1/\alpha$ while leaving the interference $e_{\mathrm{int},k}$ structurally unchanged, the optimal strategy actively drives transmit powers down until the regime boundary $N^2 \Sigma = \min_k \Theta_k$ is reached. This behavior yields a favorable operating point in which the network reduces transmit energy while replacing part of the required artificial privacy noise with inherent thermal noise.
	
	\begin{proposition}
		\label{prop:scaling_regime}
		For a fixed feasible clustering matrix \(\mathbf{A}\) and a uniformly scaled power vector \(\mathbf{p}' = \alpha \mathbf{p}\) with \(0 < \alpha < 1\), the thermal-noise proxy and the interference proxy bounds satisfy
		\begin{align}
			\Sigma(\mathbf{A},\mathbf{p}')
			&=
			\frac{1}{\alpha}\Sigma(\mathbf{A},\mathbf{p}),
			\label{eq:scaling_sigma_rev}\\
			e_{\mathrm{int},k}(\mathbf{A},\mathbf{p}')
			&=
			e_{\mathrm{int},k}(\mathbf{A},\mathbf{p}),
			\qquad \forall k\in\mathcal{N}.
			\label{eq:scaling_eint_rev}
		\end{align}
		Consequently, if $N^2\Sigma(\mathbf{A},\mathbf{p}) < \min_{k\in\mathcal{N}} \Theta_k$, 
		then the reduced objective in \eqref{eq:reduced_problem} strictly decreases under
		uniform power reduction until the boundary $N^2\Sigma(\mathbf{A},\mathbf{p}) = \min_{k\in\mathcal{N}} \Theta_k$ is reached or a lower power limit becomes active.
	\end{proposition}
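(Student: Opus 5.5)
The proof is direct substitution into the closed-form proxies, followed by a monotonicity argument along the one-parameter family $\mathbf{p}(\alpha)=\alpha\mathbf{p}$. For \eqref{eq:scaling_sigma_rev}, we substitute $p_j'=\alpha p_j$ into \eqref{eq:e_th_proxy}. Since $G_j(\mathbf{a}_j)$ depends only on $\mathbf{A}$, each summand $1/(p_jG_j)$ becomes $1/(\alpha p_jG_j)$, so the whole sum picks up the factor $1/\alpha$. For \eqref{eq:scaling_eint_rev}, we substitute into the right-hand side of \eqref{eq:e_int_proxy}. The prefactor $1/(N^2 p_k G_k^2)$ picks up $1/\alpha$, and every term of $\sum_{j\neq k} p_j(\cdot)$ picks up $\alpha$. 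The traces $\mathrm{tr}(\boldsymbol{\Phi}_{m,k}\mathbf{R}_{m,j})$ depend only on statistics and $\mathbf{A}$, so the factors cancel exactly. Here we must note explicitly that the object being optimized is the proxy bound itself, i.e.\ $e_{\mathrm{int},k}$ is identified with the right-hand side of \eqref{eq:e_int_proxy}. The CSI term \eqref{eq:e_csi_proxy} does not depend on $\mathbf{p}$ at all.

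For the consequence, fix $\mathbf{A}$ and write the reduced objective \eqref{eq:reduced_problem} along the ray $\alpha\mathbf{p}$ as a function of the scalar $s=N^2\Sigma(\mathbf{A},\alpha\mathbf{p})=s_0/\alpha$, where $s_0=N^2\Sigma(\mathbf{A},\mathbf{p})$. The interference and CSI parts are constant in $\alpha$ by the first step. The remaining part is
\[
f(s)=\sum_{k}\max(0,\Theta_k-s)+s .
\]
On the interval $s<\min_k\Theta_k$ every max is active, so $f(s)=\sum_k\Theta_k-(N-1)s$. For $N\ge 2$ this is strictly decreasing in $s$. Since $s=s_0/\alpha$ is strictly increasing as $\alpha$ decreases, the objective strictly decreases as $\alpha$ is reduced from $1$. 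This holds as long as $s_0/\alpha<\min_k\Theta_k$, i.e.\ until $\alpha=s_0/\min_k\Theta_k$, where the boundary $N^2\Sigma=\min_k\Theta_k$ is attained. It also stops earlier if some externally imposed lower power limit on a $p_k$ becomes binding.

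Scaling preserves feasibility. The constraint $0\le\alpha p_k\le P_{\max}$ still holds for $\alpha<1$, and $\mathbf{A}$ is unchanged. The strict-decrease claim needs $N\ge2$; for $N=1$ the coefficient $-(N-1)$ vanishes and the objective is only nonincreasing, which we will mention as the degenerate case.

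The main subtlety is the identification of $e_{\mathrm{int},k}$ with its proxy bound rather than the true interference. Once that is stated, everything else is one-line algebra. A second, minor point is that $\Sigma$ requires $p_j>0$ to be finite. We will assume the starting $\mathbf{p}$ is strictly positive, so that scaling by $\alpha\in(0,1)$ keeps it so.
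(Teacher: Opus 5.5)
Your proposal is correct and follows essentially the same route as the paper's proof: you substitute $\alpha\mathbf{p}$ into \eqref{eq:e_th_proxy} and into the right-hand side of \eqref{eq:e_int_proxy}, use the power-independence of $e_{\mathrm{csi},k}$, and observe that with every max term active the objective becomes $\sum_k\Theta_k-(N-1)N^2\Sigma$ plus $\alpha$-invariant terms, so it decreases as $\alpha$ shrinks until the boundary or a power floor is reached. The paper leaves several points implicit that you state explicitly and handle correctly: treating $e_{\mathrm{int},k}$ as the proxy bound itself, the need for $N\ge 2$ to get strict decrease, and positivity of $\mathbf{p}$.
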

	
	\begin{proof}
		For fixed \(\mathbf{A}\), the usable gains \(G_k(\mathbf{a}_k)\) are
		constants. Substituting \(p_j'=\alpha p_j\) into \eqref{eq:e_th_proxy} gives
		\begin{equation}
			\Sigma(\mathbf{A},\mathbf{p}')
			=
			\frac{\sigma_w^2}{N^2}
			\sum_{j=1}^{N}
			\frac{1}{\alpha p_j G_j(\mathbf{a}_j)}
			=
			\frac{1}{\alpha}\Sigma(\mathbf{A},\mathbf{p}),
		\end{equation}
		which proves \eqref{eq:scaling_sigma_rev}. Similarly, substituting
		\(p_k'=\alpha p_k\) and \(p_j'=\alpha p_j\) into the proxy bound in
		\eqref{eq:e_int_proxy} yields
		\begin{align}
			e_{\mathrm{int},k}(\mathbf{A},\mathbf{p}') 
			\leq & 
			\frac{1}{N^2 \alpha p_k G_k(\mathbf{a}_k)^2} \nonumber\\
			& \sum_{j\neq k}\alpha p_j
			\sum_{m=1}^{L}
			a_{m,k}\,\mathrm{tr}\!\bigl(\mathbf{\Phi}_{m,k}\mathbf{R}_{m,j}\bigr),  
		\end{align}
		so the common scaling factor cancels and the proxy value remains
		unchanged, establishing \eqref{eq:scaling_eint_rev}. The CSI term
		\(e_{\mathrm{csi},k}(\mathbf{A})\) in \eqref{eq:e_csi_proxy} is independent of power.
		
		Now suppose $N^2\Sigma(\mathbf{A},\mathbf{p}) < \min_{k\in\mathcal{N}} \Theta_k$ then  every maximum term in \eqref{eq:reduced_problem} is active, and the objective
		reduces to
		\begin{align}
			\sum_{k=1}^{N}\Theta_k
			&-
			(N-1)N^2\Sigma(\mathbf{A},\mathbf{p})\nonumber\\
			&+
			N^2\sum_{k=1}^{N}
			\bigl(
			e_{\mathrm{int},k}(\mathbf{A},\mathbf{p})
			+
			e_{\mathrm{csi},k}(\mathbf{A})
			\bigr). 
		\end{align}

		In \eqref{eq:scaling_sigma_rev}, decreasing \(\alpha\) increases
		\(\Sigma(\mathbf{A},\mathbf{p}')\), whereas the interference proxy term remains
		unchanged in \eqref{eq:scaling_eint_rev}. Hence the objective decreases strictly as long as the system
		remains in the regime where all privacy terms are active. The descent
		therefore continues until $N^2\Sigma(\mathbf{A},\mathbf{p}) = \min_{k\in\mathcal{N}} \Theta_k$ or until a lower power bound prevents further scaling.
	\end{proof}
	
	Proposition~\ref{prop:scaling_regime} formalizes the role of thermal
	noise in the reduced problem. In particular, it shows that, within the
	fully privacy-limited regime, uniform power reduction improves the
	objective without altering the interference proxy structure. Under
	homogeneous privacy targets, where \(\Theta_k=\Theta\) for all \(k\),
	the boundary reduces to \(N^2\Sigma(\mathbf{A},\mathbf{p})=\Theta\). This structural behavior heavily dictates the solution approach. Because the term $\Sigma(\mathbf{A}, \mathbf{p})$ couples the transmit powers of all UEs simultaneously within the objective, the power allocation subproblem does not decompose per-UE and must be solved jointly. 
	
	\begin{remark}
		\label{rem:homogeneous_privacy}
		In the homogeneous privacy scenario where all UEs share identical privacy targets ($\Theta_k = \Theta, \forall k$), the objective simplifies remarkably. The DP penalty collapses into a single threshold $N \cdot \max(0, \Theta - N^2\Sigma)$, yielding a pure system-wide state that tightly couples the privacy requirement with the aggregate thermal noise.
	\end{remark}
	
	\section{Proposed Cross-Layer Optimization Framework}
	\label{sec:solution}
	
	The reduced cross-layer design formulated in \eqref{eq:reduced_problem} is a mixed-integer non-convex program. The intractability of this formulation stems from the tight coupling between the combinatorial clustering variables $\mathbf{A}$ and the continuous transmit powers $\mathbf{p}$, exacerbated by the non-smooth maximum operator applied over the shared thermal noise constraint. To tackle this, we propose an alternating optimization (AO) framework that decouples the problem into two tractable blocks: a difference-of-convex (DC) power allocation subproblem and a greedy clustering update.
	
	\subsection{Power Allocation Subproblem}
	\label{subsec:power_alloc}
	
	In the first stage, we optimize the transmit powers $\mathbf{p}$ for a fixed clustering topology $\mathbf{A}$. For a fixed assignment, the effective gains $G_k(\mathbf{a}_k)$ and the CSI error proxies $e_{\mathrm{csi},k}(\mathbf{A})$ become deterministic constants. 
	
	We map the variables to the log-domain by defining $u_k = \ln p_k$. Consequently, the power bounds become $\ln P_{\min} \le u_k \le \ln P_{\max}$, where a positive lower bound $P_{\min} > 0$ reflects the minimum hardware transmit level. Under this transformation, the system-wide thermal noise and the scaled multi-UE interference proxies evaluate to
	\begin{equation}
		N^2\Sigma(\mathbf{u}) = \sigma_w^2 \sum_{j=1}^{N}\frac{e^{-u_j}}{G_j}, \quad N^2 e_{\mathrm{int},k}(\mathbf{u}) = \frac{1}{G_k^2}\sum_{j\neq k} e^{u_j - u_k}\,I_{kj},
		\label{eq:log_domain_terms}
	\end{equation}
	where $I_{kj} \triangleq \sum_{m=1}^{L} a_{m,k}\,\mathrm{tr}(\boldsymbol{\Phi}_{m,k}\mathbf{R}_{m,j}) \ge 0$. As both expressions are non-negative weighted sums of exponentials of affine arguments, they are convex functions of $\mathbf{u}$. 
	
	Substituting $\mathbf{u}$ into the objective of \eqref{eq:reduced_problem} yields the non-smooth function $\mathcal{U}(\mathbf{u}) = \sum_{k=1}^N \max(0, \Theta_k - N^2\Sigma(\mathbf{u})) + N^2\Sigma(\mathbf{u}) + N^2\sum_{k=1}^N e_{\mathrm{int},k}(\mathbf{u})$. However, the pointwise maximum of $0$ and the concave function $\Theta_k - N^2\Sigma(\mathbf{u})$ creates a non-convex geometry. To resolve this, we apply the algebraic identity $\max(0, \Theta_k - f) = \max(0, f - \Theta_k) - f + \Theta_k$. Setting $f = N^2\Sigma(\mathbf{u})$, the objective is exactly rearranged as
	\begin{align}
		\mathcal{U}(\mathbf{u}) =& \sum_{k=1}^N \max\!\left(0, N^2\Sigma(\mathbf{u}) - \Theta_k \right) + N^2\sum_{k=1}^N e_{\mathrm{int},k}(\mathbf{u})  \nonumber \\
		& + \sum_{k=1}^N \Theta_k - (N-1)N^2\Sigma(\mathbf{u}).
		\label{eq:dc_obj_exact}
	\end{align}
	
	This algebraic restructuring reveals a fundamental DC formulation, $\min_{\mathbf{u}} P(\mathbf{u}) - Q(\mathbf{u})$, defined precisely as
	\begin{align}
		P(\mathbf{u}) &= \sum_{k=1}^N \max\!\left(0, N^2\Sigma(\mathbf{u}) - \Theta_k\right) + N^2\sum_{k=1}^N e_{\mathrm{int},k}(\mathbf{u}), \label{eq:dc_p} \\
		Q(\mathbf{u}) &= (N-1)N^2\Sigma(\mathbf{u}). \label{eq:dc_q}
	\end{align}
	Because $\max(0, \cdot)$ is a convex non-decreasing function, its composition with the convex function $N^2\Sigma(\mathbf{u}) - \Theta_k$ is convex. Therefore, $P(\mathbf{u})$ constitutes a sum of convex functions. Similarly, $Q(\mathbf{u})$ is a positively weighted sum of convex exponentials, confirming its convexity.
	
	To optimize this DC structure, we employ the CCP. At each inner iteration $n$, we linearize the concave component $-Q(\mathbf{u})$ using a first-order Taylor expansion around the previous operating point $\mathbf{u}^{(n)}$. Evaluating the gradient yields the affine global overestimator, expressed as
	\begin{equation}
		-Q(\mathbf{u}) \approx -Q(\mathbf{u}^{(n)}) + (N-1)\sigma_w^2 \sum_{j=1}^N \frac{e^{-u_j^{(n)}}}{G_j} (u_j - u_j^{(n)}).
	\end{equation}
	To handle the non-smooth maximum operator in $P(\mathbf{u})$, we introduce auxiliary epigraph variables $\tau_k \ge 0$. Dropping the constant terms, the subproblem solved at CCP iteration $n$ evaluates to:
	\begin{subequations}
		\begin{align}
			\min_{\mathbf{u},\boldsymbol{\tau}} \quad 
			& \sum_{k=1}^{N}\tau_k + N^2\sum_{k=1}^{N} e_{\mathrm{int},k}(\mathbf{u}) + (N-1)\sigma_w^2 \sum_{j=1}^N \frac{e^{-u_j^{(n)}}}{G_j} u_j
			\label{eq:ccp_subprob}\\
			\text{s.t.}\quad
			& \tau_k \ge 0, \quad \forall k\in\mathcal{N}, \label{eq:ccp_con1} \\
			& \tau_k \ge N^2\Sigma(\mathbf{u}) - \Theta_k, \quad \forall k\in\mathcal{N}, \label{eq:ccp_con2} \\
			& \ln P_{\min} \le u_k \le \ln P_{\max}, \quad \forall k\in\mathcal{N}. \label{eq:ccp_con3}
		\end{align}
	\end{subequations}
	
	\begin{proposition}
		\label{prop:ccp_convex}
		Problem~\eqref{eq:ccp_subprob} is a convex program.
	\end{proposition}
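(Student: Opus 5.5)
The plan is to check the two ingredients of a convex program directly: the objective in \eqref{eq:ccp_subprob} must be a convex function of the joint variable $(\mathbf{u},\boldsymbol{\tau})$, and the feasible set cut out by \eqref{eq:ccp_con1}--\eqref{eq:ccp_con3} must be convex. Throughout, the clustering $\mathbf{A}$ and the CCP anchor $\mathbf{u}^{(n)}$ are held fixed. Hence $G_j>0$, $I_{kj}\ge 0$, $\sigma_w^2$, $\Theta_k$ and the weights $w_j \triangleq (N-1)\sigma_w^2 e^{-u_j^{(n)}}/G_j \ge 0$ are all constants.

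\textbf{Objective.} I would split the objective into three parts.
\begin{enumerate}
\item The term $\sum_k \tau_k$ is linear in $\boldsymbol{\tau}$.
\item The term $\sum_j w_j u_j$ is linear in $\mathbf{u}$. This is exactly the first-order surrogate of $-Q$ obtained from the CCP linearization.
\item For the interference term I would use the log-domain form \eqref{eq:log_domain_terms}, $N^2 e_{\mathrm{int},k}(\mathbf{u}) = G_k^{-2}\sum_{j\neq k} I_{kj}\, e^{u_j-u_k}$. Each summand is the exponential of an affine function of $\mathbf{u}$. It is therefore convex, since composing a convex function with an affine map preserves convexity. Multiplying by the nonnegative coefficients $I_{kj}/G_k^2$ and summing over $k$ and $j$ keeps it convex.
\end{enumerate}
A sum of convex and linear functions is convex, so the whole objective is jointly convex in $(\mathbf{u},\boldsymbol{\tau})$.

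\textbf{Constraints.}
\begin{enumerate}
\item Constraints \eqref{eq:ccp_con1} and \eqref{eq:ccp_con3} are linear (box) inequalities, so they define convex polyhedral sets.
\item Constraint \eqref{eq:ccp_con2} can be written as $g_k(\mathbf{u},\boldsymbol{\tau}) \triangleq \sigma_w^2\sum_j e^{-u_j}/G_j - \Theta_k - \tau_k \le 0$. By \eqref{eq:log_domain_terms}, $N^2\Sigma(\mathbf{u})$ is a nonnegative combination of the convex functions $e^{-u_j}$, so $g_k$ is convex in $(\mathbf{u},\boldsymbol{\tau})$. The 0-sublevel set of a convex function is convex.
\end{enumerate}
The intersection of these convex sets is convex.

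\textbf{Conclusion and main point of care.} Minimizing a convex objective over a convex feasible set is a convex program, which proves the claim. The only step that needs real attention is the direction of the epigraph constraint \eqref{eq:ccp_con2}. It is convex only because it takes the form ``convex $\le \tau_k$''. That is precisely why the identity preceding \eqref{eq:dc_obj_exact} was needed to turn $\max(0,\Theta_k - N^2\Sigma)$, the pointwise maximum of $0$ and a concave function, into $\max(0, N^2\Sigma-\Theta_k)$, the pointwise maximum of $0$ and a convex function. I would also remark that at any optimum $\tau_k = \max(0, N^2\Sigma(\mathbf{u})-\Theta_k)$. The epigraph reformulation is therefore exact, and the CCP subproblem is equivalent to minimizing the convex majorizer $P(\mathbf{u}) - Q(\mathbf{u}^{(n)}) - \nabla Q(\mathbf{u}^{(n)})^T(\mathbf{u}-\mathbf{u}^{(n)})$ of \eqref{eq:dc_obj_exact}. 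Finally, the feasible box is compact and the objective is continuous, so a minimizer exists.
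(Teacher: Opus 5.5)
Your proposal is correct and follows essentially the same argument as the paper. In both, the objective is linear in $(\mathbf{u},\boldsymbol{\tau})$ plus a nonnegative combination of exponentials of affine functions, constraint \eqref{eq:ccp_con2} is the epigraph (equivalently, a sublevel set) of the convex function $N^2\Sigma(\mathbf{u})-\Theta_k$, and your remarks on the box constraints and on the exactness of the epigraph reformulation simply make explicit what the paper's short proof leaves implicit.
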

	\begin{proof}
		The objective is linear in $\boldsymbol{\tau}$ and $\mathbf{u}$, plus a convex sum of affine-exponential interference terms $e_{\mathrm{int},k}(\mathbf{u})$. The non-trivial inequality \eqref{eq:ccp_con2} defines the region $\{(\tau_k,\mathbf{u}): \tau_k \ge N^2\Sigma(\mathbf{u}) - \Theta_k \}$, which is exactly the epigraph of the convex function $N^2\Sigma(\mathbf{u}) - \Theta_k$. The epigraph of a convex function forms a convex set. 
	\end{proof}
	It follows from Proposition~\ref{prop:ccp_convex} that the subproblem admits a global optimal solution computable in polynomial time via standard interior-point methods. Solving \eqref{eq:ccp_subprob} sequentially guarantees monotonic descent toward a stationary point of the non-convex power allocation space.
	
	\subsection{Clustering Subproblem}
	\label{subsec:clustering}
	
	In the second stage, we fix the transmit powers $\mathbf{p}$ generated by the CCP procedure and focus on optimizing the binary association matrix $\mathbf{A}$. The problem evaluates over a combinatorial set imposed by the $U_{\max}$ and $L_{\max}$ limits through constraints \eqref{eq:cluster_constraint} and \eqref{eq:load_constraint}. We adopt a greedy coordinate-descent strategy operating directly on the binary variables, avoiding fractional relaxations and rounding losses.
	
	Starting from the current clustering configuration $\mathbf{A}^{(i)}$, the procedure cycles through each UE $k \in \mathcal{N}$ and evaluates all feasible single-AP modifications to their serving set $\mathcal{L}_k$, where $\mathcal{L}_k \triangleq \{m \in \mathcal{L} : a_{m,k} = 1\}$:
	\begin{itemize}
		\item \emph{Addition:} For each AP $m \notin \mathcal{L}_k$ satisfying $|\mathcal{L}_k| < L_{\max}$ and $\sum_j a_{m,j} < U_{\max}$, the algorithm sets $a_{m,k} = 1$ and computes the resultant variation in the full surrogate objective \eqref{eq:reduced_problem}.
		\item \emph{Removal:} For each AP $m \in \mathcal{L}_k$ where $|\mathcal{L}_k| > 1$, the algorithm sets $a_{m,k} = 0$ and evaluates the objective impact.
	\end{itemize}
	If any modification reduces the utility surrogate, the steepest-descent candidate for UE $k$ is accepted before advancing to UE $k+1$. This pass iterates over $\mathcal{N}$ repeatedly until no UE accepts further structural modifications. 
	
	Each candidate evaluation incurs low computational overhead. Toggling a single link $a_{m,k}$ triggers an $\mathcal{O}(1)$ update to the usable gain $G_k$, directly translating into an $\mathcal{O}(1)$ adjustment to the global thermal noise $\Sigma$. Recalculating the interference proxy $e_{\mathrm{int},k}$ necessitates refreshing $N-1$ cross-UE trace terms, incurring a cost of $\mathcal{O}(N)$. In particular, under \eqref{eq:e_int_proxy}, toggling \(a_{m,k}\) affects only the victim-indexed interference term \(e_{\mathrm{int},k}\) through \(G_k(\mathbf{a}_k)\) and the summand indexed by \(k\), while \(e_{\mathrm{int},j}\) for \(j\neq k\) remains unchanged. Consequently, the aggregate algorithmic complexity per evaluation pass scales highly efficiently as $\mathcal{O}(N^2 L)$.
	
	\vspace{-2mm}
	\subsection{Algorithm Summary and Complexity Analysis}
	\label{subsec:algorithm}
	
	The proposed cross-layer protocol initializes the system with a proximity-based clustering matrix. To accelerate the CCP inner loop, the initial uniform transmit powers are explicitly calibrated to the privacy-noise regime boundary $p_k^{(0)} = \max\big(P_{\min}, \min\{\frac{\sigma_w^2 \sum_j G_j^{-1}}{\max_k \Theta_k},  P_{\max}\} \big), \forall k$. The whole procedure is detailed in Algorithm \ref{alg:ao}.
	
	\begin{algorithm}[t]
		\small
		\caption{Cross-Layer Alternating Optimization}
		\label{alg:ao}
		\SetAlgoLined
		\KwIn{$\{\mathbf{R}_{m,k},\mathbf{C}_{m,k}\}$, $\{\Theta_k\}$, $P_{\min}, P_{\max}, L_{\max}, U_{\max}$,  $\epsilon_{\mathrm{out}}, \epsilon_{\mathrm{in}}$.}
		\KwOut{$\mathbf{A}^\star, \mathbf{p}^\star, \boldsymbol{\sigma}^{2\star}$}
		Initialize $\mathbf{A}^{(0)}$ via proximity-based clustering\;
		Set $p_k^{(0)} = \max\big(P_{\min}, \min\{\frac{\sigma_w^2 \sum_j G_j^{-1}}{\max_k \Theta_k},  P_{\max}\} \big), \forall k$\;
		Set iteration index $i = 0$\;
		\Repeat{$|\mathcal{U}^{(i)} - \mathcal{U}^{(i-1)}| / \mathcal{U}^{(i-1)} < \epsilon_{\mathrm{out}}$}{
			\tcp{Block 1: Power Update (CCP)}
			Set $n = 0$ and $\mathbf{u}^{(n)} = \ln \mathbf{p}^{(i)}$\;
			\Repeat{$\|\mathbf{u}^{(n)} - \mathbf{u}^{(n-1)}\|_2 < \epsilon_{\mathrm{in}}$}{
				Solve convex program \eqref{eq:ccp_subprob}--\eqref{eq:ccp_con3} for fixed $\mathbf{A}^{(i)}$ and linearization point $\mathbf{u}^{(n)}$\;
				Update $\mathbf{u}^{(n+1)}$ with the optimal solution\;
				$n \leftarrow n + 1$\;
			}
			Set $\mathbf{p}^{(i+1)} = \exp(\mathbf{u}^{(n)})$\;
			\tcp{Block 2: Clustering Update}
			Run greedy coordinate-descent for fixed $\mathbf{p}^{(i+1)}$\;
			Set result to $\mathbf{A}^{(i+1)}$\;
			$i \leftarrow i + 1$\;
		}
		\tcp{Final DP Noise Recovery}
		Set $\sigma_k^{2\star} = \max\!\big(0, \Theta_k - N^2\Sigma(\mathbf{A}^\star, \mathbf{p}^\star)\big), \forall k$\;
	\end{algorithm}
	
	\begin{proposition}
		\label{prop:ao_convergence}
		The sequence of objective values 
		$\{\mathcal{U}^{(i)}\}_{i \ge 0}$ generated by 
		Algorithm~\ref{alg:ao} is monotonically non-increasing and 
		bounded below, hence convergent.
	\end{proposition}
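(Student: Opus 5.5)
The argument is a standard block-descent chain. I will show $\mathcal{U}(\mathbf{A}^{(i+1)},\mathbf{p}^{(i+1)}) \le \mathcal{U}(\mathbf{A}^{(i)},\mathbf{p}^{(i+1)}) \le \mathcal{U}(\mathbf{A}^{(i)},\mathbf{p}^{(i)})$, where $\mathcal{U}$ is the reduced objective \eqref{eq:reduced_problem}. I will then add a lower bound and invoke the monotone convergence theorem. Throughout, I take $\mathcal{U}^{(i)}$ to be the reduced objective evaluated at the iterate, with $\boldsymbol{\sigma}^2$ eliminated via $\sigma_k^{2\star}=\max(0,\Theta_k-N^2\Sigma)$. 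This matches the full surrogate \eqref{eq:utility_surrogate} at its partial minimizer over $\boldsymbol{\sigma}^2$, so the final noise-recovery step does not change the value.

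\emph{Power block.} Fix $\mathbf{A}^{(i)}$ and write $\mathcal{U}(\mathbf{u})=P(\mathbf{u})-Q(\mathbf{u})+\sum_k\Theta_k$ as in \eqref{eq:dc_obj_exact}--\eqref{eq:dc_q}. Because $Q$ is convex, its linearization at $\mathbf{u}^{(n)}$ is a global underestimator of $Q$. Hence the surrogate $\hat{\mathcal{U}}_n(\mathbf{u}) = P(\mathbf{u}) - Q(\mathbf{u}^{(n)}) - \nabla Q(\mathbf{u}^{(n)})^T(\mathbf{u}-\mathbf{u}^{(n)}) + \sum_k\Theta_k$ majorizes $\mathcal{U}$ and is tight at $\mathbf{u}^{(n)}$.

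By Proposition~\ref{prop:ccp_convex}, problem \eqref{eq:ccp_subprob} is convex. At its optimum the epigraph variables satisfy $\tau_k=\max(0,N^2\Sigma(\mathbf{u})-\Theta_k)$, so its solution minimizes $\hat{\mathcal{U}}_n$ over the box \eqref{eq:ccp_con3}. The point $\mathbf{u}^{(n)}$ is feasible for that box, because $\mathbf{p}^{(i)}$ lies in $[P_{\min},P_{\max}]$ by initialization and by induction. This gives $\mathcal{U}(\mathbf{u}^{(n+1)})\le\hat{\mathcal{U}}_n(\mathbf{u}^{(n+1)})\le\hat{\mathcal{U}}_n(\mathbf{u}^{(n)})=\mathcal{U}(\mathbf{u}^{(n)})$. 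Chaining over the inner iterations yields the second inequality of the chain.

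\emph{Clustering block.} Greedy coordinate descent only accepts a single-link modification if it strictly reduces the reduced objective, and every modification respects \eqref{eq:cluster_constraint}--\eqref{eq:binary_constraint}. Removals require $|\mathcal{L}_k|>1$, so each $G_k>0$ and all terms stay finite. If no move is accepted, $\mathbf{A}$ is unchanged. Either way the first inequality of the chain holds. I also expect this pass to terminate, since the feasible set of $\mathbf{A}$ is finite and each accepted move strictly decreases the objective, so no configuration can be revisited.

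\emph{Lower bound.} Every term of \eqref{eq:reduced_problem} is nonnegative: the max terms, $\Sigma\ge 0$, $e_{\mathrm{int},k}\ge 0$ (since $I_{kj}\ge 0$), and $e_{\mathrm{csi},k}\ge 0$. Thus $\mathcal{U}^{(i)}\ge 0$. A sharper bound is also available: $\sum_k\max(0,\Theta_k-N^2\Sigma)+N^2\Sigma\ge\max_k\Theta_k>0$, which keeps the relative stopping criterion well defined.

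A monotone non-increasing sequence that is bounded below converges. The main subtlety is a consistency issue rather than a hard step. The bounds \eqref{eq:e_int_proxy}--\eqref{eq:e_csi_proxy} are inequalities, so $\mathcal{U}$ must be defined with these proxy expressions taken as equalities. In addition, the inner CCP loop and the solver must return exact minimizers, or at least points no worse than the warm start. With inexact solves, I would enforce an acceptance check that keeps $\mathbf{u}^{(n)}$ whenever the solver's output does not decrease $\mathcal{U}$.
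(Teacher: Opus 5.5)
Your proposal is correct and follows the same block-descent argument as the paper: the CCP power update does not increase the objective, the greedy clustering accepts only non-increasing moves, the objective is bounded below by nonnegativity, and monotone convergence finishes the proof. Your version spells out what the paper's one-line proof leaves implicit. This includes the majorization by the linearized $Q$, feasibility of the warm start in the power box, termination of the greedy pass, and the requirement that the interference and CSI proxies be taken as equalities and the subproblems be solved exactly (or at least no worse than the warm start).
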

	\begin{proof}
		The inner CCP loop monotonically descends the power 
		subproblem by standard DC programming theory, and the 
		coordinate-descent clustering accepts only modifications 
		that reduce or maintain the objective. The composite 
		sequence is therefore non-increasing and bounded below, 
		guaranteeing convergence.
	\end{proof}
	
	Any terminal point is blockwise locally optimal with respect 
	to the CCP power update and the greedy single-link 
	clustering modifications. However, because the problem 
	remains nonconvex with a combinatorial binary block, 
	Algorithm~\ref{alg:ao} is not guaranteed to reach a global 
	optimum of~\eqref{eq:prob_obj}; the solution should be 
	interpreted as a locally optimized operating point.
	
	\subsubsection{Computational Complexity}
	\label{rem:complexity}
	The per-iteration cost of Algorithm~\ref{alg:ao} is dominated by 
	two components. The CCP power allocation subproblem 
	in~\eqref{eq:ccp_subprob} involves $N$ optimization variables and 
	$\mathcal{O}(N)$ constraints, yielding a per-inner-iteration cost 
	of $\mathcal{O}(N^3)$ when solved via interior-point methods. The 
	greedy clustering pass evaluates at most $\mathcal{O}(NL)$ 
	candidate single-link modifications per sweep. The overall 
	complexity per AO iteration is therefore 
	$\mathcal{O}(I_{\mathrm{ccp}} N^3 + NL)$, where 
	$I_{\mathrm{ccp}}$ denotes the number of CCP inner iterations. In 
	practice, the AO framework converges within three to five outer 
	iterations across all tested configurations.
	
	\vspace{-2mm}
	\section{Numerical Analysis}
	\label{sec:results}
	\subsection{Simulation Setup}
	\label{sub:simulation}
	We consider a square coverage area of side length $D=500$~m, within which $L=36$ APs are deployed on a uniform $6\times6$ grid at height $10$~m. Each AP is equipped with $M=4$ antennas and serves a population of $N=20$ single-antenna UEs placed uniformly at random at height $1.5$~m. Wrap-around topology is applied to eliminate boundary effects \cite{ngo2017cell}. The carrier frequency is $f_c=1.9$~GHz, the system bandwidth is $B=20$~MHz, and a receiver noise figure of $9$~dB yields a per-complex-dimension thermal noise power of $\sigma_n^2 = k_{\mathrm{B}} T_0 B \cdot \mathrm{NF} \approx 8\times10^{-14}$~W, where $\mathrm{NF}$ is the noise figure in linear scale. Path loss follows the 3GPP UMi-NLOS model \cite{3gpp38901} with log-normal shadow fading of standard deviation $8$~dB, and the minimum distance is capped at $10$~m for model validity. UL channel estimation uses orthogonal pilot sequences of length $\tau_p = N$ at pilot transmit power $p_p = 200$~mW, and MMSE estimation under the isotropic spatial correlation model yields the per-antenna estimate and error covariances $\phi_{m,k}$ and $c_{m,k}$ defined in Section~\ref{subsec:clustering_channel}. The UC clustering budget allows each UE to be served by at most $L_{\max}=10$ APs, while each AP serves at most $U_{\max}=6$ UEs per round. Transmit powers are constrained to the interval $[P_{\min}, P_{\max}] = [1, 200]$~mW. For the privacy mechanism, the gradient clipping threshold is set to $C=1.0$ with failure probability $\delta=10^{-5}$, and the target privacy budget $\varepsilon$ is swept across a range of values from $0.5$ to $100$ to expose the regime structure identified in Section~\ref{sec:problem}. Unless stated otherwise, the alternating optimization is initialized with proximity-based clustering, and the inner (CCP) and outer (AO) convergence tolerances are set to $\epsilon_{\mathrm{in}}=10^{-5}$ and $\epsilon_{\mathrm{out}}=10^{-4}$, respectively. Unless stated otherwise, the nominal ergodic design proxy 
	$\kappa = 1$ is used in~\eqref{eq:robust_proxy}. To 
	quantify the impact of conservative privacy crediting, 
	Section~\ref{sec:hardening} additionally evaluates the 
	robust counterparts $\kappa \in \{0.5, 0.75\}$ at 
	$\varepsilon = 10$.
	\vspace{-2mm}
	\subsection{Regime Analysis and Optimization Behavior}
	\label{sub:regime_analysis}
	\subsubsection{Objective Decomposition Across Privacy Regimes}
	\label{subsub:regime_transition}
	Fig.~\ref{1.regime_transition} decomposes the optimized objective $\mathcal{U}^{\star}$ into its constituent terms across different privacy budget. The AO algorithm converges within three to five outer iterations for all tested values. For stringent privacy requirements ($\varepsilon \leq 5$), artificial DP noise constitutes the dominant component, accounting for 98.5\% of the objective at $\varepsilon=2$ and exceeding 99.9\% at $\varepsilon=0.5$. In this privacy-limited regime, thermal noise and multi-UE interference together contribute less than 1.5\%, confirming the analysis in Section~\ref{sec:problem} that physical-layer optimization yields marginal returns when the privacy threshold $\Theta_k$ far exceeds the achievable thermal noise floor $N^2\Sigma$.
	
	A sharp regime transition occurs between $\varepsilon=5$ and $\varepsilon=10$. At $\varepsilon=10$, the optimizer eliminates artificial DP noise entirely by driving $N^2\Sigma$ to the privacy boundary, as predicted by the reduced objective structure in~\eqref{eq:reduced_problem}. For all $\varepsilon \geq 10$, the objective is governed exclusively by physical-layer distortions, with multi-UE interference contributing 61\% to 71\% and thermal noise accounting for 29\% to 39\%. This ratio remains stable across the noise-limited range, indicating that the interference-thermal balance is determined by the network geometry and large-scale fading statistics rather than the privacy parameters. The CSI estimation error remains negligible throughout, below 0.05\%, consistent with the use of orthogonal pilots and the power-independent self-term structure derived in Appendix~\ref{app:csi_derivation}. The total optimized objective spans over four orders of magnitude from $\varepsilon=0.5$ to $\varepsilon=100$, underscoring the substantial performance improvement available as privacy constraints are relaxed.
	\begin{figure}[tb]
		\centering
		\includegraphics[scale=0.25]{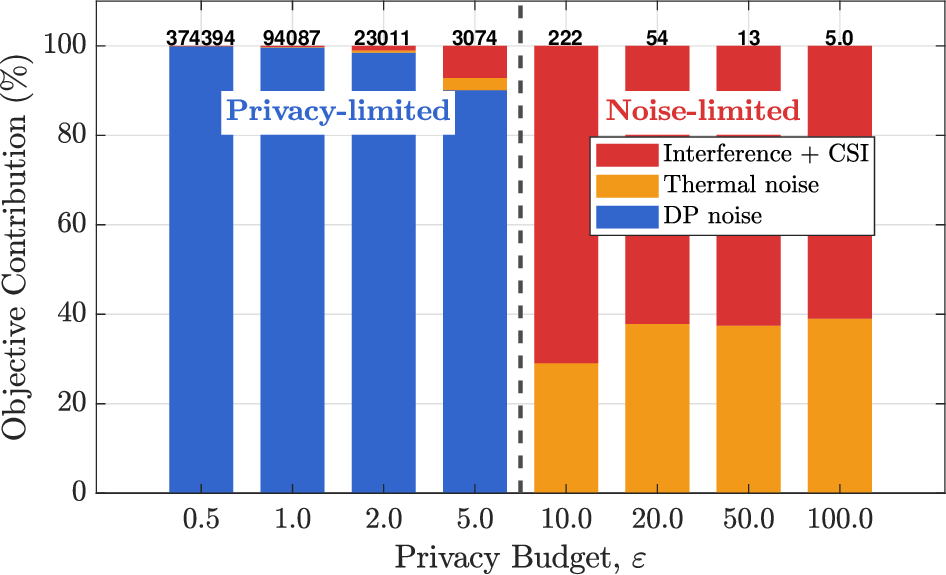}
		\caption{Optimized objective composition versus privacy budget $\epsilon$. The vertical dashed line marks the regime transition between privacy-limited and noise-limited operation.}
		\label{1.regime_transition}
	\end{figure}
	
	\subsubsection{Optimal Physical-Layer Configuration}
	\label{subsub:ap_power}
	Fig.~\ref{2.AP_power} reveals the physical-layer operating point selected by the optimizer as a function of $\varepsilon$. In the privacy-limited regime ($\varepsilon \leq 5$), all UEs are assigned exactly one serving AP and transmit at the minimum power $P_{\min}=1$~mW. This configuration is a direct consequence of the negative coefficient on $N^2\Sigma$ in the reduced objective identified in Section~\ref{sec:problem}: by minimizing the serving infrastructure and transmit power simultaneously, the optimizer maximizes the thermal noise floor, thereby substituting physical noise for costly artificial DP noise. The resulting transmit power is nearly 23~dB below $P_{\max}$, yielding a substantial reduction in UE energy expenditure while maintaining the target ergodic privacy design constraint.
	
	As the privacy budget increases beyond $\varepsilon=10$, the threshold $\Theta_k$ drops below $N^2\Sigma$ and the optimizer transitions to improving aggregation fidelity. The mean number of serving APs grows from 1.5 at $\varepsilon=10$ to 5.8 at $\varepsilon=100$, reflecting the increasing value of AP macro-diversity for suppressing multi-UE interference, which accounts for over 60\% of the objective in this regime as shown in Fig.~\ref{1.regime_transition}. Transmit power increases only modestly to 1.37~mW at $\varepsilon=50$, and in fact decreases slightly to 1.32~mW at $\varepsilon=100$, indicating that the optimizer continues to favor clustering expansion over power escalation. This behavior confirms that under the CF architecture considered here, spatial diversity through additional AP associations is more effective at reducing aggregation variance than increasing transmit power, which would simultaneously amplify cross-UE interference through the coupling structure in~\eqref{eq:e_int_proxy}. In the noise-limited regime, increasing the AP association budget from $L_{\max} = 1$ to $L_{\max} = 5$ captures over $85\%$ of the achievable clustering gain, with diminishing returns beyond $L_{\max} = 5$. In the privacy-limited 
	regime, the optimizer selects a single serving AP per UE 
	regardless of the available budget.
	\begin{figure}[tb]
		\centering
		\includegraphics[scale=0.25]{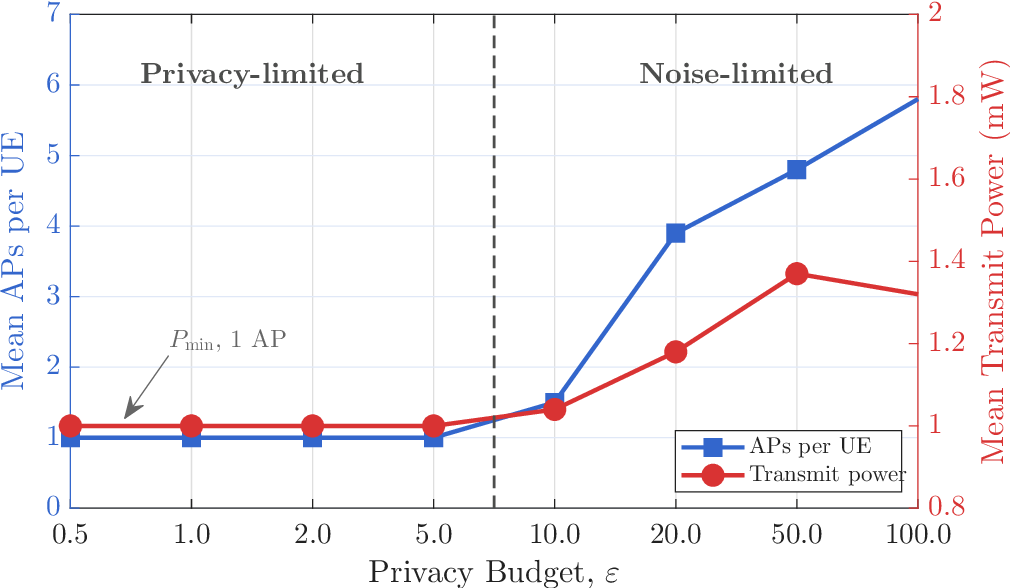}
		\caption{Optimized mean AP association and transmit power versus privacy budget $\epsilon$. The dashed line separates the privacy-limited and noise-limited operating regimes.}
		\label{2.AP_power}
	\end{figure}
	
	\subsubsection{Privacy-Noise Boundary Targeting}
	\label{subsub:Clustering_Topology}
	Fig.~\ref{7.Clustering_Topology} directly verifies the boundary-targeting mechanism predicted by the reduced objective analysis in Section~\ref{sec:problem}. The privacy threshold $\Theta_k$ decreases monotonically with $\varepsilon$, spanning four orders of magnitude from $\varepsilon=0.5$ to $\varepsilon=100$. In the privacy-limited regime ($\varepsilon \leq 5$), the optimized thermal noise floor $N^2\Sigma$ remains well below $\Theta_k$, with a gap exceeding two orders of magnitude at $\varepsilon=0.5$. Despite the minimal clustering and power allocation observed in Fig.~\ref{2.AP_power}, the physical-layer degrees of freedom are insufficient to close this gap, and the residual privacy cost is borne by artificial DP noise. We emphasize that this boundary alignment is an ergodic design property: the optimizer targets the expected thermal noise floor $N^2\Sigma = \Theta_k$, with instantaneous fluctuations governed by the channel-hardening concentration characterized in Remark \ref{rem:privacy_scope}.
	
	For $\varepsilon \geq 10$, the two curves converge and remain tightly matched. At $\varepsilon=10$, the optimizer achieves $N^2\Sigma=\Theta_k$, eliminating artificial noise entirely. This alignment persists at $\varepsilon=20$ and $\varepsilon=50$, confirming that the optimizer consistently drives the system to the regime boundary $N^2\Sigma=\Theta_k$ where the substitution effect is fully exploited. Beyond this boundary, additional degrees of freedom are allocated to reducing interference through AP diversity, as reflected in the clustering expansion observed in Fig.~\ref{2.AP_power}. This convergence validates the structural insight that the optimal cross-layer design treats the physical thermal noise floor as a free privacy-masking resource, activating artificial noise injection only when the privacy demand exceeds what the channel environment can provide.
	\begin{figure}[tb]
		\centering
		\includegraphics[scale=0.25]{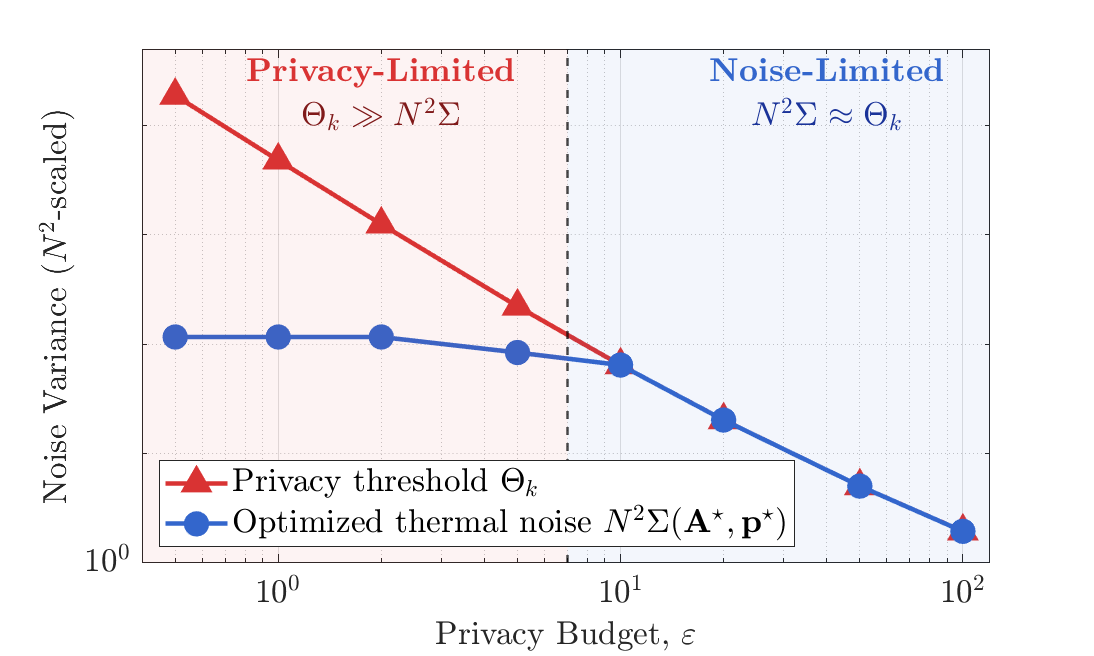}
		\caption{Regime boundary targeting. The optimizer drives $N^2\Sigma^{\star}$ 
			toward the privacy threshold $\Theta$ as $\varepsilon$ increases. In the 
			privacy-limited regime ($\varepsilon \leq 5$), thermal noise remains below 
			the threshold, requiring artificial DP noise. At the boundary 
			($\varepsilon \approx 10$), $N^2\Sigma^{\star} \approx \Theta$, eliminating 
			artificial noise entirely. In the noise-limited regime ($\varepsilon \geq 20$), 
			physical noise alone satisfies privacy.}
		\label{fig:boundary_targeting}
		\label{7.Clustering_Topology}
	\end{figure}
	\vspace{-3mm}
	\subsection{Channel-Hardening Quality and Design Robustness}
	\label{sec:hardening}
	
	The ergodic privacy proxy developed in Section~\ref{subsec:zcdp} relies on the concentration of the instantaneous self-gain 
	$S_k = \sum_{m=1}^{L} a_{m,k}\|\hat{\mathbf{h}}_{m,k}\|^2$ 
	around its mean~$G_k(\mathbf{a}_k)$. We now quantify this 
	concentration for the simulated deployment and evaluate the 
	sensitivity of the optimized objective to the robustness 
	parameter~$\kappa$ introduced in~\eqref{eq:robust_proxy}.
	
	\subsubsection{Hardening Concentration}
	
	Under the isotropic spatial correlation model with $M = 4$ 
	antennas per AP, Table~\ref{tab:hardening} reports the hardening quality metric~$\beta_k$ from~\eqref{eq:beta_k}, the coefficient of variation $\mathrm{CV}(S_k) = 1/\sqrt{\beta_k}$, and the resulting 
	Chebyshev failure bounds across all $N = 20$~UEs under the 
	two clustering configurations identified in 
	Section~\ref{subsub:ap_power}. Because the Chebyshev bound in Remark~\ref{rem:sensitivity} is distribution-free, it is naturally conservative for the chi-squared-type self-gain variables arising here and should therefore be interpreted as a worst-case concentration certificate rather than a tight characterization of the instantaneous privacy shielding. Table~\ref{tab:hardening} reveals that in the privacy-limited regime where the optimizer selects a single serving AP per UE (Section~\ref{subsub:ap_power}), the hardening metric is pinned at its minimum $\beta_k = M = 4$, and both the per-round and cumulative Chebyshev bounds exceed unity. 
	These bounds are therefore vacuous and provide no meaningful 
	concentration certificate at this operating point. Additionally, 
	expanding the serving set in the noise-limited regime 
	improves concentration for well-positioned UEs but leaves 
	the worst-positioned UE at $\beta_k = 4.0$, indicating that 
	a single poorly connected UE can limit the system-wide 
	hardening quality regardless of the clustering budget.
	
	\begin{table}[t]
		\centering
		\caption{Channel-hardening quality under optimized clustering
			for two operating regimes ($M=4$, $N=20$, $\xi=0.5$).}
		\label{tab:hardening}
		\footnotesize
		\setlength{\tabcolsep}{3.5pt}
		\renewcommand{\arraystretch}{1.08}
		\begin{tabular}{@{}lcccc@{}}
			\toprule
			\textbf{Regime} & \textbf{$\beta_k$} & \textbf{CV$(S_k)$} & \textbf{Per-round}$^{\dagger}$ & \textbf{Cumul.}$^{\ddagger}$ \\
			\midrule
			\shortstack[l]{Single-AP\\($\varepsilon \le 5$)}
			& $4.0$
			& $50\%$
			& $1.00$
			& $200$ \\
			
			\shortstack[l]{Multi-AP\\($\varepsilon \ge 10$)}
			& $[4.0,\,22.7]$
			& $[21\%,\,50\%]$
			& $[0.18,\,1.00]$
			& $[35.2,\,200]$ \\
			\bottomrule
		\end{tabular}
		
		\vspace{1mm}
		\parbox{\columnwidth}{\scriptsize
			Ranges are $[\text{best UE},\,\text{worst UE}]$.
			$^{\dagger}$ Per-round Chebyshev bound: $1/(\xi^2\beta_k)$.
			$^{\ddagger}$ Cumulative union bound over $T=200$ rounds: $T/(\xi^2\beta_k)$.
			Bounds exceeding $1$ are formally vacuous.
		}
	\end{table}
	
	These results confirm that, for $M = 4$, the Chebyshev-based 
	bound from Remark~\ref{rem:sensitivity} yields conservative bounds that do not provide formal realization-wise certificates at this antenna count. Since $\beta_k \geq M$ with equality for single-AP 
	service under isotropic correlation, deployments with larger 
	antenna arrays directly strengthen the concentration and 
	narrow the gap between the ergodic proxy and instantaneous 
	behavior. For the $M = 4$ configuration studied here, the 
	formal gap motivates the use of the robust 
	privacy requirement~\eqref{eq:robust_constraint} with $\kappa < 1$, 
	which tightens the privacy calibration through a controlled 
	reduction of the credited thermal noise, independently of 
	the concentration bound quality. For this reason, we next evaluate the robust counterpart 
	in~\eqref{eq:robust_constraint}, which tightens the credited thermal-noise 
	contribution independently of the concentration-bound 
	looseness.
	
	\subsubsection{Impact of Conservative Privacy Crediting}
	
	To quantify the cost of conservative privacy crediting, 
	we solve Problem~\eqref{eq:reduced_problem} at 
	$\varepsilon = 10$ for $\kappa \in \{1.0, 0.75, 0.50\}$. 
	Table~\ref{tab:kappa} reports the optimized objective, the 
	mean per-UE artificial noise, and the relative objective 
	increase for each setting.
	
	\begin{table}[t]
		\centering
		\caption{Impact of robustness parameter~$\kappa$ on the 
			optimized objective at $\varepsilon = 10$.}
		\label{tab:kappa}
		\footnotesize
		\renewcommand{\arraystretch}{1.15}
		\begin{tabular}{@{} c c c c @{}}
			\toprule
			$\kappa$ 
			& $\bar{\sigma}_k^{2\star}$ 
			& $\mathcal{U}^\star$ 
			& $\Delta\mathcal{U}$ \\
			\midrule
			$1.00$ & $\approx 0$  & $222.0$ & (baseline) \\
			$0.75$ & $16.1$       & $544.3$ & $+145.4\%$ \\
			$0.50$ & $32.3$       & $866.8$ & $+290.8\%$ \\
			\bottomrule
		\end{tabular}
		
		\vspace{1mm}
		\raggedright
		\scriptsize
		$\bar{\sigma}_k^{2\star}$: mean per-UE artificial noise.\;
		$\Delta\mathcal{U}$: relative increase over $\kappa = 1$.
	\end{table}
	
	Table~\ref{tab:kappa} shows that the objective increase is almost entirely attributable to re-injected artificial noise rather than changes in the physical-layer configuration. At $\kappa = 0.75$, the total artificial noise across all UEs, which accounts for over $99\%$ of 
	the objective increase relative to $\kappa = 1$. 
	The same pattern holds at $\kappa = 0.50$. This indicates 
	that the robust formulation adjusts only the noise allocation 
	while the underlying power levels and clustering topology 
	remain essentially unchanged, confirming that the robust counterpart operates as a pure noise-allocation adjustment without disrupting the optimized physical-layer operating point.
	
	Two properties of the conservative design are worth noting. 
	First, Algorithm~\ref{alg:ao} converges within the same 
	number of outer iterations for all three~$\kappa$ values; 
	only the privacy constraint 
	boundary~\eqref{eq:robust_constraint} shifts, preserving 
	the computational structure of the solver. Second, the proposed framework at $\kappa = 1.0$ with 
	$\varepsilon = 10$ achieves reductions exceeding $82\%$ relative to all 
	baselines (Section~\ref{subsec:baseline_comparison}). The choice of~$\kappa$ in practice should be guided by the 
	deployment-specific fading statistics and the required 
	confidence level for the privacy calibration. The results 
	above demonstrate that the framework maintains structural 
	advantages over conventional baselines across the full tested 
	range: even at $\kappa = 0.50$, the optimized objective 
	$\mathcal{U}^\star = 866.8$ remains below all baseline 
	values reported in Section~\ref{subsec:baseline_comparison} at 
	$\varepsilon = 10$, indicating that the physical-layer operating point optimized at $\kappa = 1$ remains structurally valid under tightened privacy crediting, with only the noise-allocation layer absorbing the additional conservatism.
	
	\vspace{-2mm}
	\subsection{Baseline Comparison}
	\label{subsec:baseline_comparison}
	
	To evaluate the contribution of each design component, we compare the 
	proposed cross-layer framework against five baselines. 
	\emph{Privacy-Unaware} runs the full AO with $\Theta_k \approx 0$ 
	and evaluates the result under the actual $\varepsilon$-constraint, 
	quantifying the cost of ignoring privacy during physical-layer 
	design~\cite{sifaou2024ota}. \emph{Fixed-Cluster} retains the 
	initial proximity-based AP association and optimizes only transmit 
	powers via CCP under the privacy-aware objective, serving as the 
	CF counterpart to single-cell DP-aware power-control 
	designs~\cite{liu2021privacy, jiang2023optimized, yan2024over}. 
	\emph{Equal-Power} fixes all UEs at $P_{\max} = 200$~mW and 
	optimizes only clustering, isolating the benefit of privacy-aware 
	power control. \emph{Single-Cell} assigns each UE to its single 
	strongest AP at $P_{\max}$, representing conventional cellular 
	operation without macro-diversity. \emph{Naive-DP} uses proximity 
	clustering at $P_{\max}$ with the full DP budget 
	$\sigma_k^2 = \Theta_k$ injected locally, without leveraging 
	thermal noise as a privacy resource.
	
	Fig.~\ref{4.objective} plots the resulting objective $\mathcal{U}^\star$ 
	on a logarithmic scale for $\varepsilon \in \{1, 2, 5, 10, 20, 50\}$. 
	The proposed scheme achieves the lowest objective across all tested 
	values. At $\varepsilon = 1$, all six schemes produce comparable 
	objectives within 2.1\% of each other, since the DP noise floor 
	$\Theta_k \approx 4.8 \times 10^3$ overwhelms all physical-layer 
	contributions as discussed in Section~\ref{subsub:regime_transition}. 
	In this deeply privacy-limited regime, the dominant design action is 
	power reduction rather than sophisticated resource allocation. As 
	$\varepsilon$ increases, the gains of joint optimization become 
	pronounced. At $\varepsilon = 10$, the proposed scheme reduces the 
	objective by 82.8\% over Privacy-Unaware, 82.8\% over Fixed-Cluster, 
	and 83.1\% over Naive-DP. This improvement originates from the 
	boundary-targeting mechanism verified in 
	Section~\ref{subsub:Clustering_Topology}, which eliminates artificial 
	noise entirely while conventional baselines continue to inject 
	$\sigma_k^2 \approx \Theta_k$. The gap widens further at 
	$\varepsilon = 50$, where the proposed scheme achieves reductions of 86.8\% and 
	89.1\% compared to Privacy-Unaware 
	and Naive-DP, respectively.  Among the baselines, Equal-Power and Naive-DP 
	consistently produce the highest objectives due to their use of 
	$P_{\max}$, which minimizes the thermal noise floor and forfeits 
	the substitution effect. The Fixed-Cluster and 
	Privacy-Unaware schemes perform comparably to each other, suggesting 
	that when clustering is either fixed at the proximity solution or 
	optimized without privacy awareness, the resulting AP associations 
	are similarly suboptimal under the actual privacy constraint. These 
	results confirm that the performance gains of the proposed framework 
	are driven primarily by the joint exploitation of the privacy-noise 
	substitution and adaptive clustering, neither of which is captured 
	by any single baseline in isolation.
	\begin{figure}[tb]
		\centering
		\includegraphics[scale=0.25]{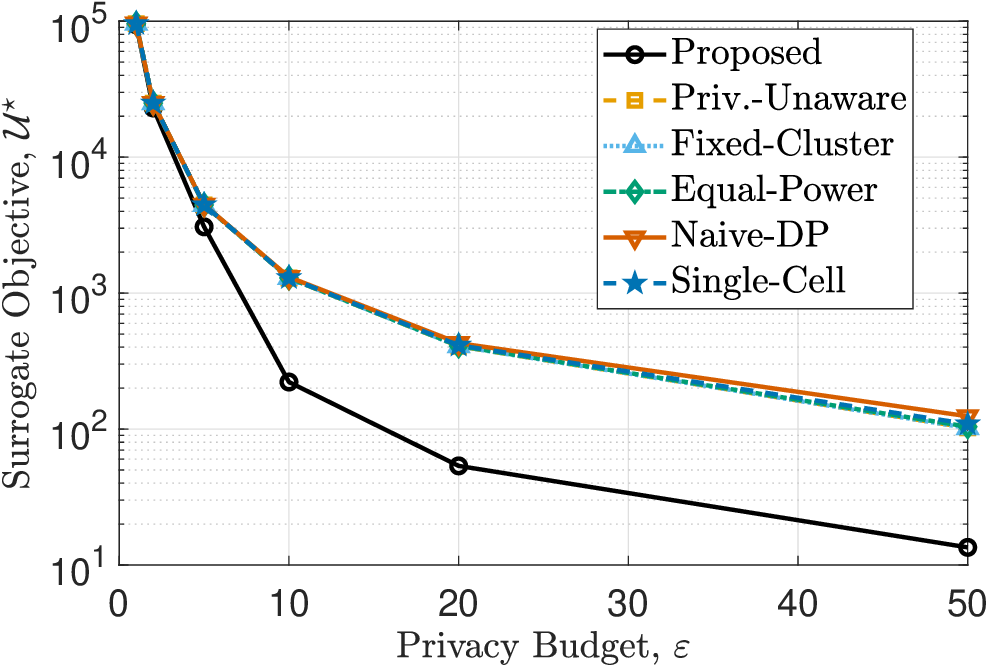}
		\caption{Impact of privacy budget $\epsilon$ on the optimized objective $\mathcal{U}^\star$ for all baseline schemes.}
		\label{4.objective}
	\end{figure}
	
	We note that the baselines are designed to isolate individual 
	design components rather than replicate a specific prior method. 
	Since no existing work jointly optimizes UC clustering, 
	CF OTA power control, and per-UE DP 
	noise under the architecture considered here, a direct one-to-one 
	adaptation of prior single-cell or digital-FL frameworks is not 
	straightforward. In particular, single-cell DP-OTA power control 
	strategies such as~\cite{wei2022low, jiang2023optimized} rely on a 
	common alignment coefficient across all UEs, which does not 
	translate to the distributed per-AP combining and scalar 
	forwarding model in~\eqref{eq:rx_ap}. 
	
	\subsection{FL Training Validation}
	\label{subsec:fl_validation}
	
	To validate that the optimized physical-layer configurations translate 
	into tangible learning performance gains, we conduct E2E 
	federated learning simulations using the FedAvg protocol. Each UE 
	computes a local model-delta update through $\tau = 1$ mini-batch stochastic gradient descent (SGD) 
	step per communication round, clips the update to $\ell_2$-norm bound 
	$C = 1.0$, and injects Gaussian DP noise with variance 
	$\sigma_k^{2\star}$ determined by the optimized cross-layer 
	configuration. The privatized updates are aggregated at the CPU 
	using the statistical OTA noise model, where the total physical-layer 
	distortion $\Sigma + \sum_k (e_{\mathrm{int},k} + e_{\mathrm{csi},k})$ 
	is injected as additive Gaussian noise consistent with the variance 
	decomposition in~\eqref{eq:ota_decomp}. Training data is partitioned 
	across $N = 20$ UEs using a Dirichlet distribution with concentration 
	parameter $\alpha_{\text{Dir}} = 0.5$, yielding approximately 3{,}000 samples per 
	UE for MNIST~\cite{lecun1998mnist}  and 2{,}500 for CIFAR-10~\cite{krizhevsky2009cifar}. We evaluate on MNIST using a two-layer convolutional neural network (CNN) with 
	$d = 80{,}202$ parameters and on CIFAR-10 
	using a similar architecture with $d = 1{,}070{,}794$ parameters. The 
	complete set of FL hyperparameters is summarized in 
	Table~\ref{tab:fl_hyper}. The \emph{Proposed} cross-layer design is 
	compared with the following three baselines: \emph{Naive-DP} 
	(proximity clustering, $P_{\max}$, full DP budget $\sigma_k^2 = \Theta_k$); 
	\emph{Privacy-Unaware} (AO-optimized for $\varepsilon \to \infty$, 
	evaluated under the actual privacy constraint); and \emph{Fixed-Cluster}. The Fixed-Cluster baseline, already introduced 
	in Section~\ref{subsec:baseline_comparison}, serves here as the 
	privacy-aware fixed-topology benchmark, thereby isolating the 
	contribution of clustering adaptation in the E2E FL setting.
	
	\begin{table}[t]
		\centering
		\caption{Federated learning hyperparameters}
		\label{tab:fl_hyper}
		\begin{tabular}{lcc}
			\toprule
			Parameter & Symbol & Value \\
			\midrule
			Communication rounds & $T$ & 200 \\
			Local SGD steps per round & $\tau$ & 1 \\
			Local learning rate & $\eta$ & 0.1 \\
			Global step size & $\eta_g$ & 1.0 \\
			Mini-batch size & $B$ & 32 \\
			Gradient clipping bound & $C$ & 1.0 \\
			Dirichlet concentration & $\alpha_{\text{Dir}}$ & 0.5 \\
			OTA channel mode & -- & Statistical \\
			Evaluation interval & -- & 10 rounds \\
			\bottomrule
		\end{tabular}
	\end{table}
	
	Fig.~\ref{MNIST_10} presents the MNIST test accuracy trajectory at $\varepsilon=10$, where the Proposed scheme converges reliably to 96.89\% after $T=200$ rounds, while all baselines remain near 12.7\%. At $\varepsilon=50$, the Proposed scheme reaches 97.05\% whereas the baselines plateau near 22.96\% 
	(Table~\ref{tab:fl_accuracy}), confirming the pattern observed at 
	$\varepsilon = 10$. This 
	performance gap confirms that the dominant E2E gain arises from the 
	joint optimization of clustering and power, rather than from 
	privacy-aware power control alone. In particular, the near-overlap of 
	Fixed-Cluster with Naive-DP and Privacy-Unaware indicates that, under a 
	fixed topology, power adaptation by itself is insufficient to move the 
	system into a favorable learning regime. The Proposed scheme, on the 
	other hand, operates at $\sigma_k^{2\star} \approx 0$ by steering the 
	system toward the boundary $N^2\Sigma = \Theta_k$, thereby replacing 
	costly artificial DP noise with physical-layer noise while preserving 
	a sufficiently small aggregation distortion for the MNIST task.
	
	\begin{figure}[tb]
		\centering
		\includegraphics[scale=0.25]{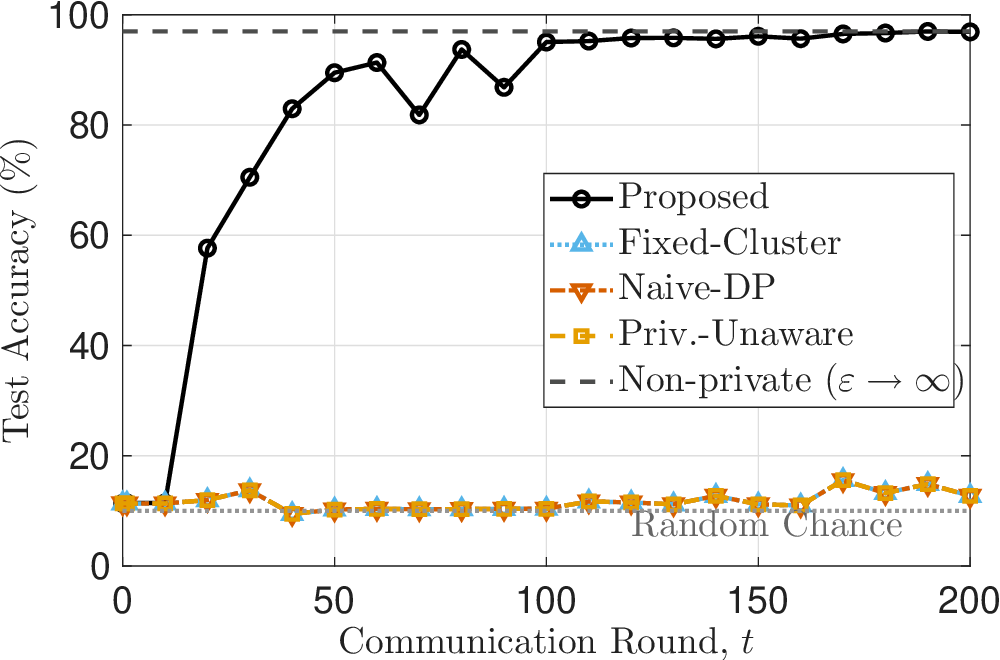}
		\caption{Test accuracy versus communication round on MNIST at 
			$\varepsilon = 10$. The dashed line indicates the 
			non-private upper bound ($\varepsilon \to \infty$).}
		\label{MNIST_10}
	\end{figure}
	
	Fig.~\ref{CIFAR10_50} extends the evaluation to CIFAR-10 at 
	$\varepsilon = 50$ with $d = 1{,}070{,}794$ parameters. The Proposed 
	scheme reaches 47.71\% accuracy with a clearly rising trend at round 
	200, whereas Fixed-Cluster, Naive-DP, and Privacy-Unaware remain near 
	random chance at 10.78\%, 10.74\%, and 10.74\%, respectively. The 
	lower absolute accuracy relative to MNIST reflects the increased 
	difficulty of CIFAR-10 under a lightweight two-layer CNN with 
	single-batch local updates, rather than a limitation of the cross-layer 
	framework. At the same time, the 36.93 percentage point 
	separation from the strongest baseline confirms that the proposed 
	boundary-targeting mechanism remains effective even for a model with 
	roughly 13$\times$ more parameters than the MNIST architecture. 
	
	\begin{figure}[tb]
		\centering
		\includegraphics[scale=0.25]{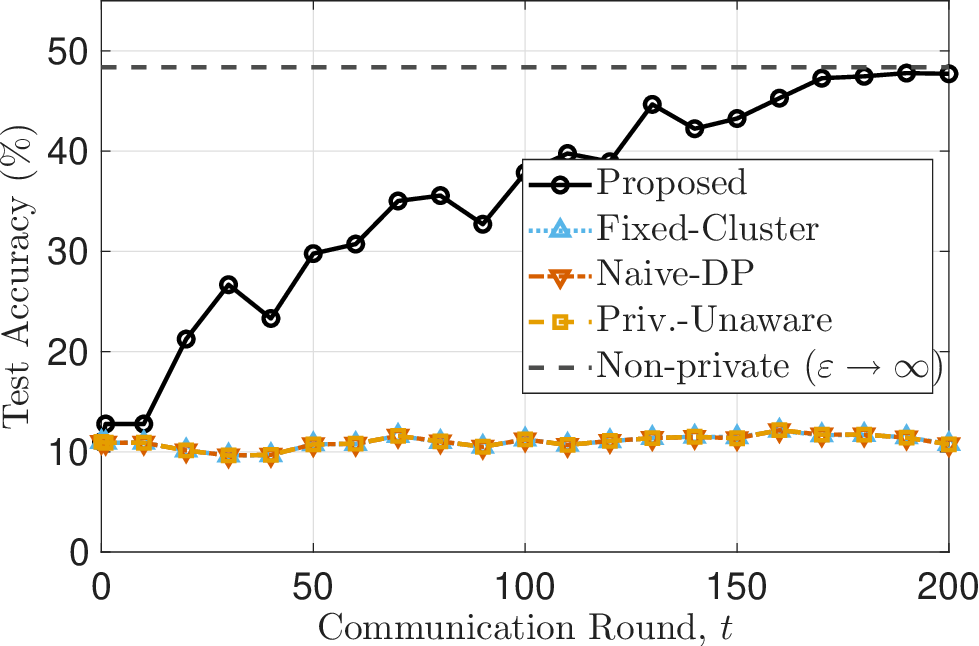}
		\caption{Test accuracy versus communication round on CIFAR-10 at 
			$\varepsilon = 50$. The dashed line indicates the 
			non-private upper bound ($\varepsilon \to \infty$).}
		\label{CIFAR10_50}
	\end{figure}
	
	Table~\ref{tab:fl_accuracy} consolidates the final test accuracy 
	across all configurations. A consistent pattern emerges: the Proposed 
	scheme enables meaningful learning whenever the privacy budget is 
	sufficient to sustain a feasible noise-to-signal ratio (NSR), whereas the 
	other three baselines remain in poor or infeasible regimes. At 
	$\varepsilon = 2$, all four schemes produce identical accuracy of 
	11.36\%, confirming the infeasibility predicted by the NSR analysis. With $\Theta_k \approx 1.25 \times 10^3$ at this 
	privacy level, even the Proposed scheme still requires 
	$\sigma_k^{2\star} = 1.13 \times 10^3$, resulting in a per-coordinate 
	NSR exceeding $10^6$ and preventing any gradient signal from surviving 
	the aggregation process. This marks the boundary between feasible and 
	infeasible learning under the considered system constraints. For 
	$\varepsilon \in \{10,50\}$, however, the Proposed method is able to 
	push the system into a noise-limited regime in which 
	$\sigma_k^{2\star} \approx 0$, thereby restoring near-noiseless learning. 
	By contrast, Fixed-Cluster, Naive-DP, and Privacy-Unaware remain unable 
	to exploit this regime transition effectively, which explains their 
	persistent collapse in E2E FL accuracy. These results therefore 
	identify the privacy range over which cross-layer optimization is 
	actionable and demonstrate that clustering adaptation is the primary driver of  converting physical-layer privacy shielding into actual learning 
	gains.
	\begin{table}[t]
		\centering
		\caption{E2E FL test accuracy (\%) under DP}
		\label{tab:fl_accuracy}
		\resizebox{\columnwidth}{!}{%
			\begin{tabular}{llccccc}
				\toprule
				Dataset & $\varepsilon$ & Proposed & Fixed-Clust. & Naive-DP & Priv.-Unaware & Gap$^\dagger$ \\
				\midrule
				\multirow{3}{*}{MNIST}
				& 2  & 11.36 & 11.36 & 11.36 & 11.36 & 0.00 \\
				& 10 & \textbf{96.89} & 12.73 & 12.72 & 12.72 & 84.16 \\
				& 50 & \textbf{97.05} & 22.96 & 22.72 & 22.74 & 74.09 \\
				\midrule
				CIFAR-10
				& 50 & \textbf{47.71} & 10.78 & 10.74 & 10.74 & 36.93 \\
				\bottomrule
				\multicolumn{7}{l}{\footnotesize $^\dagger$Gap = Proposed $-$ best baseline (percentage points).} \\
				\multicolumn{7}{l}{\footnotesize Random chance: 10\%. At $\varepsilon{=}2$, all methods are infeasible (NSR $>10^6$).}
			\end{tabular}%
		}
	\end{table}
	
	The FL experiments reported above use a single network topology 
	realization and random seed, with the OTA channel modeled through 
	the statistical variance proxy rather than per-round fading 
	realizations. This setup isolates the impact of the cross-layer 
	optimization on the aggregation noise floor while maintaining 
	computational tractability across the full configuration sweep. 
	The lightweight CNN architectures and single-batch local updates 
	are chosen to make the physical-layer configuration the primary 
	controlled performance differentiator rather than the model 
	capacity or local computation budget. Evaluation under multiple 
	topology realizations, per-round fading channels, and larger model 
	architectures remains a direction for future investigation.
	\vspace{-2mm}
	\section{Conclusion}
	\label{sec:conclusion}
	
	In this paper, we establish that the thermal noise floor in CF mMIMO 
	ULs can serve as a controllable privacy-design resource for OTA FL under ergodic channel-hardening conditions. By 
	formulating a cross-layer optimization that couples AP-UE clustering, 
	transmit power allocation, and artificial noise injection under 
	per-UE ergodic zCDP design constraints, the proposed framework 
	exposes a regime-dependent operating structure that governs the 
	interaction between privacy requirements and physical-layer 
	configuration. In the privacy-limited regime, the optimizer 
	deliberately reduces the utilization of AP-UE links and transmit power, 
	substituting inherent channel noise for costly artificial 
	perturbation. Beyond a critical privacy threshold, the framework 
	transitions to noise-limited operation, where artificial noise 
	vanishes entirely and all optimization effort is redirected toward 
	suppressing multi-UE interference through spatial diversity. The 
	boundary-targeting mechanism that mediates this transition ensures 
	that the system operates at the point where the physical noise floor precisely satisfies the ergodic privacy design constraint in expectation, eliminating wasteful over-provisioning of artificial noise. For finite antenna arrays where channel-hardening 
	concentration is moderate, the robust formulation provides 
	a tunable mechanism to strengthen the privacy calibration 
	at a controlled increase in aggregation variance, allowing 
	operators to balance thermal-noise credit against design 
	conservatism based on deployment-specific conditions. E2E training 
	experiments on both MNIST and CIFAR-10 confirm that this variance 
	reduction translates into a qualitative shift in learning outcomes, 
	enabling convergent training under privacy budgets where conventional 
	approaches fail to exceed random-chance accuracy. The framework is agnostic to the learning task and model architecture, requiring only large-scale fading statistics and privacy targets as inputs. These quantities vary slowly relative to the FL training timescale, enabling the proposed optimization to be executed offline 
	or periodically, with limited additional overhead relative to the FL training timescale in practical CF deployments. 
	Future work includes extending the analysis to multi-antenna UEs and frequency-selective channels, incorporating adaptive 
	clipping mechanisms that jointly optimize the sensitivity-bias 
	tradeoff, and investigating the interaction between partial UE 
	participation and the identified regime structure under practical 
	fronthaul capacity constraints.
	
	\vspace{-2mm}
	\appendices
	\section{Derivation of the CSI Error Proxy}
	\label{app:csi_derivation}
	
	We derive the variance proxy for the CSI error and demonstrate that its dominant self-term is independent of the transmit power $p_k$. Based on the OTA decomposition, the residual distortion caused by imperfect channel estimates at the CPU is $e_{\mathrm{csi}}^{(t)} = \sum_{k=1}^N \sum_{m=1}^{L} \mathbf{v}_m^{(t)H} \tilde{\mathbf{h}}_{m,k}^{(t)} x_k^{(t)}$.
	Assuming the normalized model updates have unit variance, we have $\mathbb{E}[|x_k|^2] = p_k$. Since the MMSE estimation error $\tilde{\mathbf h}_{m,k}$ is zero-mean and independent across APs, the variance contribution associated with UE~$k$ evaluates to $\mathrm{Var}\!\left( e_{\mathrm{csi},k} \right) = p_k \sum_{m=1}^{L} \mathbb{E}\!\left[ \mathbf{v}_m^H \mathbf{C}_{m,k} \mathbf{v}_m \right]$. Substituting the local combiner $\mathbf{v}_m = \sum_{j=1}^{N} a_{m,j} c_j \hat{\mathbf{h}}_{m,j}$ yields:
	\begin{align}
		\mathrm{Var}\!\left( e_{\mathrm{csi},k} \right) = p_k \sum_{m=1}^{L} & \sum_{j=1}^{N} \sum_{j'=1}^{N} a_{m,j}  a_{m,j'} c_j c_{j'} \nonumber\\ 
		& \mathbb{E}\!\left[ \hat{\mathbf{h}}_{m,j}^H \mathbf{C}_{m,k} \hat{\mathbf{h}}_{m,j'} \right].
		\label{eq:app_csi_var_step2}
	\end{align}
	
	In mMIMO systems, the dominant contribution in~\eqref{eq:app_csi_var_step2} arises from the self-aligned term $j=j'=k$, which yields $\mathrm{Var}_{\mathrm{self}}\!\left( e_{\mathrm{csi},k} \right) 
	= p_k \sum_{m=1}^{L} a_{m,k} c_k^2 \, \mathrm{tr}\!\big( \boldsymbol{\Phi}_{m,k} \mathbf{C}_{m,k} \big)$. Substituting the squared equalization coefficient $c_k^2 = \frac{1}{N^2 p_k G_k(\mathbf{a}_k)^2}$ reveals a perfect algebraic cancellation of the transmit power $p_k$, written as
	\begin{equation}
		\mathrm{Var}_{\mathrm{self}}\!\left( e_{\mathrm{csi},k} \right) = \frac{1}{N^2 G_k(\mathbf{a}_k)^2} \sum_{m=1}^{L} a_{m,k} \, \mathrm{tr}\!\big( \boldsymbol{\Phi}_{m,k} \mathbf{C}_{m,k} \big).
		\label{eq:app_csi_var_final}
	\end{equation}
	
	The remaining terms in \eqref{eq:app_csi_var_step2} correspond to cross-UE CSI-leakage contributions. When the channel estimates are mutually uncorrelated across UEs, as in the orthogonal-pilot setting with $\tau_p=N$, the off-diagonal terms $j\neq j'$ vanish in expectation, while the residual diagonal cross-UE terms $j=j'\neq k$ are proportional to $a_{m,j}c_j^2 p_k\,\mathrm{tr}\!\bigl(\boldsymbol{\Phi}_{m,j}\mathbf{C}_{m,k}\bigr)$. Under pilot contamination, additional off-diagonal contributions may also persist through nonzero cross-covariances between $\hat{\mathbf h}_{m,j}$ and $\hat{\mathbf h}_{m,j'}$. Since these residual cross-UE terms do not admit the same simplified structure as the dominant self-term, the CSI proxy in \eqref{eq:e_csi_proxy} retains only the power-independent self-term in \eqref{eq:app_csi_var_final}, while the remaining cross-UE CSI-leakage terms are omitted as a modeling approximation for analytical tractability. \hfill $\blacksquare$
	
	\vspace{-2mm}
	\section{Proof of Proposition~\ref{prop:convergence}}
	\label{app:convergence_proof}
	
	Under Assumption~\ref{ass:smoothness}, the $L_s$-smoothness descent lemma applied to the global update $\mathbf{w}^{(t+1)}=\mathbf{w}^{(t)}-\eta_g\hat{\mathbf{g}}^{(t)}$ gives
	\begin{equation}
		F(\mathbf{w}^{(t+1)}) \le F(\mathbf{w}^{(t)}) - \eta_g\bigl\langle \nabla F(\mathbf{w}^{(t)}),\, \hat{\mathbf{g}}^{(t)}\bigr\rangle + \tfrac{L_s\eta_g^2}{2}\bigl\|\hat{\mathbf{g}}^{(t)}\bigr\|^2.
		\label{eq:app_descent}
	\end{equation}
	Under Assumption~\ref{ass:clipping}, clipping bias is negligible, so the aggregated gradient estimate is conditionally unbiased: $\mathbb{E}[\hat{\mathbf{g}}^{(t)}\mid\mathbf{w}^{(t)}] = \nabla F(\mathbf{w}^{(t)})$. Applying the bias--variance decomposition $\mathbb{E}[\|\hat{\mathbf{g}}^{(t)}\|^2] = \|\nabla F(\mathbf{w}^{(t)})\|^2 + \mathrm{Var}[\hat{\mathbf{g}}^{(t)}]$ and taking expectations yields
	\begin{align}
		& \mathbb{E}\!\left[F(\mathbf{w}^{(t+1)})\right] \le \mathbb{E}\!\left[F(\mathbf{w}^{(t)})\right] \nonumber\\
		&- \eta_g\!\left(1-\tfrac{L_s\eta_g}{2}\right) \mathbb{E}\!\left[\|\nabla F(\mathbf{w}^{(t)})\|^2\right] + \tfrac{L_s\eta_g^2}{2}\,V,
		\label{eq:app_per_round}
	\end{align}
	where $V = \sigma_g^2/N + \mathrm{Var}_{\mathrm{agg}}$ collects the stochastic gradient variance (Assumption~\ref{ass:variance}) and the OTA aggregation variance from~\eqref{eq:ota_decomp}. The three independent noise sources, namely DP perturbation, multi-UE interference plus CSI error, and thermal noise contribute additively to $\mathrm{Var}_{\mathrm{agg}}$.
	
	For $\eta_g \le 1/L_s$, the coefficient satisfies $1 - L_s\eta_g/2 \ge 1/2$. Observing~\eqref{eq:app_per_round} over $t=0,\dots,T{-}1$ provides
	\begin{equation}
		\frac{\eta_g}{2}\sum_{t=0}^{T-1} \mathbb{E}\!\left[\|\nabla F(\mathbf{w}^{(t)})\|^2\right] \le \Delta_0 + \frac{L_s\eta_g^2 T}{2}\,V.
		\label{eq:app_telescope}
	\end{equation}
	Dividing both sides by $T\eta_g/2$ completes the proof. \hfill $\blacksquare$
	
	%%%%%%%%%% References %%%%%%%%%%%%%%
	\vspace{-3mm}
	\balance
	\bibliographystyle{IEEEtran}
	\bibliography{OTA_DP}

@IEEEtranBSTCTL{IEEEexample:BSTcontrol,
	CTLuse_url               = "no",
	CTLdash_repeated_names   = "no",
	CTLuse_forced_etal       = "yes",
	CTLmax_names_forced_etal = "1",
	CTLnames_show_etal       = "1"
}

@techreport{3gpp38901,
    author = {{3GPP}},
    title = {Study on channel model for frequency spectrum above 6 GHz},
    institution = {3rd Generation Partnership Project},
    type = {Technical Report},
    number = {TR 38.901},
    version = {16.1.0},
    date = {2020-03},
    year = {2020}
}

@article{lecun1998mnist,
  author  = {LeCun, Yann and Bottou, L{\'e}on and Bengio, Yoshua and Haffner, Patrick},
  title   = {Gradient-Based Learning Applied to Document Recognition},
  journal = {Proceedings of the IEEE},
  volume  = {86},
  number  = {11},
  pages   = {2278--2324},
  year    = {1998},
  doi     = {10.1109/5.726791}
}

@techreport{krizhevsky2009cifar,
  author      = {Krizhevsky, Alex and Hinton, Geoffrey},
  title       = {Learning Multiple Layers of Features from Tiny Images},
  institution = {University of Toronto},
  year        = {2009},
  type        = {Technical Report}
}

@article{sifaou2024ota,
  title={Over-the-air federated learning over scalable cell-free massive {MIMO}},
  author={Sifaou, Houssem and Li, Geoffrey Ye},
  journal={IEEE Transactions on Wireless Communications},
  volume={23},
  number={5},
  pages={4214--4228},
  year={2024},
  month={May},
  doi={10.1109/TWC.2023.3315962},
  publisher={IEEE}
}

@article{tavangaran2024privacy,
  title={On differential privacy for federated learning in wireless systems with multiple base stations},
  author={Tavangaran, Nima and Chen, Mingzhe and Yang, Zhaohui and Silva Jr., Jos Mairton B. and Poor, H. Vincent},
  journal={IET Communications},
  volume={18},
  number={20},
  pages={1853--1867},
  year={2024},
  month={January},
  doi={10.1049/cmu2.12722},
  publisher={The Institution of Engineering and Technology}
}

@article{mao2024providing,
  title={Providing differential privacy for federated learning over wireless: A cross-layer framework},
  author={Mao, Jiayu and Yin, Tongxin and Yener, Aylin and Liu, Mingyan},
  journal={arXiv preprint arXiv:2412.04408},
  year={2024},
  month={December}
}

@article{mcmahan2017communication,
  title={Communication-efficient learning of deep networks from decentralized data},
  author={McMahan, Brendan and Moore, Erika and Ramage, Daniel and Hampson, Seth and y Arcas, Blaise Aguera},
  journal={International Conference on Machine Learning (ICML)},
  pages={1273--1282},
  year={2017},
  organization={PMLR}
}

@article{dwork2014algorithmic,
  author    = {Cynthia Dwork and Aaron Roth},
  title     = {The Algorithmic Foundations of Differential Privacy},
  journal   = {Foundations and Trends in Theoretical Computer Science},
  volume    = {9},
  number    = {3-4},
  pages     = {211--407},
  year      = {2014},
  doi       = {10.1561/0400000042},
  publisher = {Now Publishers}
}

@inproceedings{bun2016concentrated,
  title={Concentrated differential privacy: Simplifications, extensions, and lower bounds},
  author={Bun, Mark and Steinke, Thomas},
  booktitle={Theory of Cryptography Conference (TCC)},
  pages={635--658},
  year={2016},
  organization={Springer},
  publisher={Springer, Berlin, Heidelberg}
}

@ARTICLE{batool2026privacy,
  author={Batool, Iqra and Fouda, Mostafa M. and Ibrahem, Mohamed I. and Fadlullah, Zubair Md},
  journal={IEEE Internet of Things Journal}, 
  title={Privacy-Preserving Federated Meta-Learning for Cell-Free Massive {MIMO}: Instant Adaptation with Distributed Intelligence}, 
  year={2026},
  volume={},
  number={},
  pages={1-1},
  doi={10.1109/JIOT.2026.3662248}}

@ARTICLE{yan2024over,
  author={Yan, Na and Wang, Kezhi and Pan, Cunhua and Chai, Kok Keong and Shu, Feng and Wang, Jiangzhou},
  journal={IEEE Transactions on Communications}, 
  title={Over-the-Air Federated Averaging With Limited Power and Privacy Budgets}, 
  year={2024},
  volume={72},
  number={4},
  pages={1998-2013},
  doi={10.1109/TCOMM.2023.3338738}}

@article{yang2020federated,
  title={Federated learning via over-the-air computation},
  author={Yang, Kai and Jiang, Tao and Shi, Yuanming and Ding, Zhi},
  journal={IEEE Transactions on Wireless Communications},
  volume={19},
  number={3},
  pages={2022--2035},
  year={2020},
  publisher={IEEE}
}

@article{amiri2020federated,
  title={Federated learning over wireless fading channels},
  author={Amiri, Mohammad Mohammadi and G{\"u}nd{\"u}z, Deniz},
  journal={IEEE Transactions on Wireless Communications},
  volume={19},
  number={5},
  pages={3546--3557},
  year={2020},
  publisher={IEEE}
}

@inproceedings{abadi2016deep,
  title={Deep learning with differential privacy},
  author={Abadi, Martin and Chu, Andy and Goodfellow, Ian and McMahan, H Brendan and Mironov, Ilya and Talwar, Kunal and Zhang, Li},
  booktitle={Proceedings of the 2016 ACM SIGSAC Conference on Computer and Communications Security},
  pages={308--318},
  year={2016}
}

@article{liu2021privacy,
  title={Privacy for free: Wireless federated learning via uncoded transmission with adaptive power control},
  author={Liu, Dongzhu and Simeone, Osvaldo},
  journal={IEEE Journal on Selected Areas in Communications},
  volume={39},
  number={1},
  pages={170--185},
  year={2021},
  publisher={IEEE}
}

@article{ngo2017cell,
  title={Cell-free massive {MIMO} versus small cells},
  author={Ngo, Hien Quoc and Ashikhmin, Alexei and Yang, Hongyang and Larsson, Erik G and Marzetta, Thomas L},
  journal={IEEE Transactions on Wireless Communications},
  volume={16},
  number={3},
  pages={1834--1850},
  year={2017},
  publisher={IEEE}
}

@article{buzzi2017cell,
  title={Cell-free massive {MIMO}: User-centric approach},
  author={Buzzi, Stefano and D'Andrea, Carmen},
  journal={IEEE Wireless Communications Letters},
  volume={6},
  number={6},
  pages={706--709},
  year={2017},
  publisher={IEEE}
}

@article{bjornson2019making,
  title={Making cell-free massive {MIMO} competitive with {MMSE} processing and centralized implementation},
  author={Bj{\"o}rnson, Emil and Sanguinetti, Luca},
  journal={IEEE Transactions on Wireless Communications},
  volume={19},
  number={1},
  pages={77--90},
  year={2019},
  publisher={IEEE}
}

@article{wei2022low,
  title={Low-latency federated learning over wireless channels with differential privacy},
  author={Wei, Kang and Li, Jun and Ding, Ming and Ma, Chuan and Su, Hang and Zhang, Bingsheng and Poor, H Vincent},
  journal={IEEE Journal on Selected Areas in Communications},
  volume={40},
  number={1},
  pages={290--307},
  year={2022},
  publisher={IEEE}
}

@article{jiang2023optimized,
  title={Optimized power control for over-the-air federated averaging with data privacy guarantee},
  author={Jiang, Jiahua and Han, Kailin and Du, Yulin absorption and Zhu, Guangxu and Wang, Zhiyong and Cui, Shuguang},
  journal={IEEE Transactions on Vehicular Technology},
  volume={72},
  number={2},
  pages={2728--2733},
  year={2023},
  publisher={IEEE}
}

@article{wang2022interference,
  title={Interference management for over-the-air federated learning in multi-cell wireless networks},
  author={Wang, Zhaohui and Zhou, Yijie and Shi, Yuanming and Zhuang, Weihua},
  journal={IEEE Journal on Selected Areas in Communications},
  volume={40},
  number={8},
  pages={2361--2377},
  year={2022},
  publisher={IEEE}
}

@article{gesbert2010multi,
  title={Multi-cell {MIMO} cooperative networks: A new look at interference},
  author={Gesbert, David and Hanly, Stephen and Huang, Howard and Shamai Shitz, Shlomo and Simeone, Osvaldo and Yu, Wei},
  journal={IEEE Journal on Selected Areas in Communications},
  volume={28},
  number={9},
  pages={1380--1408},
  year={2010},
  publisher={IEEE}
}

@article{cao2022optimized,
  title={Optimized power control design for over-the-air federated edge learning},
  author={Cao, Xiaota and Zhu, Guangxu and Xu, Jie and Wang, Zhiyong and Cui, Shuguang},
  journal={IEEE Journal on Selected Areas in Communications},
  volume={40},
  number={1},
  pages={342--358},
  year={2022},
  publisher={IEEE}
}
	
\end{document}